\documentclass[preprint, review]{elsarticle}

\usepackage{amsmath,amssymb,amsthm,amsfonts}
\usepackage{mathtools}
\usepackage{bm}
\usepackage{graphicx}  
\usepackage{float}     
\usepackage{caption}
\usepackage{subcaption}
\usepackage{booktabs}   
\usepackage{multirow}
\usepackage{adjustbox}
\usepackage{array}
\usepackage{rotating}
\usepackage{orcidlink}
\usepackage{enumerate} 
\usepackage{siunitx}
\usepackage{url}
\usepackage{hyperref} 

\hypersetup{
    colorlinks=true,
    linkcolor=blue,
    citecolor=blue,
    urlcolor=blue,
    pdftitle={APHBXII},
    unicode=true
}

\newcommand{\cbar}{\overline{c}}
\newcommand{\Jbar}{\overline{J}}
\newcommand{\APHG}{\textsc{APH--G}}
\newcommand{\APHBXII}{\textsc{APHBXII}}

\newcommand{\BXII}{\textsc{BXII}}
\newcommand{\HBXII}{\textsc{HBXII}}
\newcommand{\APBXII}{\textsc{APBXII}}   
\newcommand{\MOBXII}{\textsc{MOBXII}}
\newcommand{\rth}{$r^{\text{th}}$}
\newcommand{\E}{\mathbb{E}}
\newcommand{\muprime}{\mu'}
\newcommand{\partialderiv}[2]{\frac{\partial #1}{\partial #2}}
\newtheorem{theorem}{Theorem}
\newtheorem{definition}{Definition}

\begin{document}

\begin{frontmatter}

\title{Alpha Power Harris--G Family of Distributions: Properties and Application to Burr XII Distribution}

\author[uniosun,fedpolyede]{Gbenga A. Olalude\corref{cor},\orcidlink{0000-0002-9950-0552}}
\ead{gbenga.olalude@federalpolyede.edu.ng}

\author[uniosun]{Taiwo A. Ojurongbe}

\author[uniosun]{Olalekan A. Bello}

\author[uniosun]{Kehinde A. Bashiru}

\author[uniosun]{Kazeem A. Alamu}

\cortext[cor]{Corresponding author}

\address[uniosun]{Department of Statistics, Osun State University, Oshogbo, Osun State, Nigeria}

\address[fedpolyede]{Department of Mathematics and Statistics, Federal Polytechnic Ede, Ede, Osun State, Nigeria}

\begin{abstract}
This study introduces a new family of probability distributions, termed the alpha power Harris-generalized (\APHG) family. The generator arises by incorporating two shape parameters from the Harris-G framework into the alpha-power transformation, resulting in a more flexible class for modelling survival and reliability data. A special member of this family, obtained using the two-parameter Burr XII distribution as the baseline, is developed and examined in detail. Several analytical properties of the proposed alpha power Harris Burr XII (\APHBXII) model are derived, which include closed-form expressions for its moments, mean and median deviations, Bonferroni and Lorenz curves, order statistics, and R\'enyi and Tsallis entropies. Parameter estimation is performed via maximum likelihood, and a Monte Carlo simulation study is carried out to assess the finite-sample performance of the estimators. In addition, three real lifetime datasets are analyzed to evaluate the empirical performance of the \APHBXII\ distribution relative to four competing models. The results show that the five-parameter \APHBXII\ model provides superior fit across all datasets, as supported by model-selection criteria and goodness-of-fit statistics.
\end{abstract}

\begin{keyword}
alpha power \sep 
Burr XII \sep 
entropy \sep 
Harris distribution \sep 
lifetime data \sep 
maximum likelihood estimation \sep 
reliability
\end{keyword}

\end{frontmatter}

\section{Introduction and motivation}
In recent decades, flexible distributions that capture the heterogeneity and complex behavior in lifetime and reliability data have been widely studied in fields like reliability engineering and survival modelling. This is because traditional models such as exponential, Rayleigh, and standard Weibull often fail to reflect the skewed and varying hazard rates seen in real datasets \cite{Warahena2023exponentiated, Shah2025new}. According to \cite{Lee2013methods}, popular methods of generating new families of distributions since the 1980s can be categorized as ‘methods of combination’ for the reason that these methods attempt to combine existing distributions to form new distributions or add new parameters to an existing distribution. 

The exponentiation method was among the earliest applications of the combination method, as categorized by \cite{Lee2013methods}. \cite{Mudholkar1993exponentiated} introduced the exponentiated--Weibull family by adding a second shape parameter to the Weibull distribution. The resulting distribution enabled the modelling of bathtub-shaped hazard rates that an ordinary Weibull could not accommodate. Another earlier approach was the parameter insertion method, as seen in the procedure of \cite{Marshall1997new}, who proposed adding a new parameter to an existing distribution to adjust its scale or shape. The result gave rise to the popular Marshall--Olkin family of distributions, which is a distribution that improves tail flexibility by compounding baseline lifetimes with an independent shock or frailty parameter \cite{Marshall1997new}. Another notable paradigm shift is the transformation of the cumulative distribution function (CDF) using bounded mixing distributions. \cite{Eugene2002beta} pioneered the Beta--generated (Beta--G) family by using the Beta distribution to transform the CDF of a baseline distribution. Their work introduced the addition of two shape parameters that govern the lower and upper tails of any newly generated distribution. Building on this, \cite{Cordeiro2011new} introduced the Kumaraswamy--G family, substituting the Beta distribution with a Kumaraswamy distribution to attain a comparable outcome while offering a more manageable cumulative distribution function (CDF).

In 2013, \cite{Alzaatreh2013new} presented a unifying framework known as the T--X family of distributions. Their proposed method provided a general recipe for constructing new families of continuous distributions by passing a baseline random variable X through a transformed cumulative distribution function (CDF), T. Meanwhile, many previous and recent continuous families of distributions can be derived as special cases of the T--X family \cite{Alzaatreh2013new}. These distributions, families of distributions, generators and related “--G” generators such as, the generalized odd Burr III--G by \cite{Haq2019generalized}, the odd Chen--G (OC--G) by \cite{Anzagra2020odd}, the inverse odd Weibull--G by \cite{Eghwerido2020inverse}, the sine extended odd Fréchet--G by \cite{Jamal2021sine}, the type II half logistic exponentiated--G by \cite{BELLO2021type}, the gamma odd Burr X--G (GOBX--G) by \cite{Tlhaloganyang2022gamma}, the compounded bell--G by \cite{Alsadat2023compounded}, the new generalized odd Fréchet--odd exponential--G by \cite{Sadiq2023new}, the Harris--odd Burr type X--G by \cite{Chipepa2023harris}, the gamma--Topp--Leone--type II--exponentiated half logistic--G by \cite{Oluyede2023gamma}, the Lomax--exponentiated odds ratio--G (L--EOR--G) by \cite{Roy2024lomax}, and the odd generalized Rayleigh reciprocal Weibull--G by \cite{Musekwa2025odd}, are popular, recent and novel generated families of distributions that have been used  to analyze lifetime data in several practical areas.

In 2017, \cite{Mahdavi2017new} introduced the alpha power transformation (APT) as a novel method for generating families of distributions. Their idea was inspired by a variation of the Poisson--G family proposed by \cite{Abouelmagd2017poisson}, where the parameter $\alpha = e^{\theta}$. The main purpose of the APT is to provide flexibility in the hazard function and incorporate skewness in a baseline distribution or generator through the addition of a shape parameter $\alpha$. The cumulative distribution function (CDF) of the APT is defined as

\begin{equation}
F_{\text{APT}}(x;\,\alpha) = \begin{cases} \displaystyle \frac{\alpha^{\,G(x)} - 1}{\alpha - 1}, & \alpha > 0,\; \alpha \ne 1, \\[10pt] G(x), & \alpha = 1, \end{cases} \qquad x \in \mathbb{R}.
\label{Eq:1}
\end{equation}

Empirical studies have shown that APT--extended distributions such as the APT--extended Burr II by \cite{Ogunde2020alpha}, the APT--log--logistic by \cite{Aldahlan2020alpha}, the APT--Dagum by \cite{Reyad2021alpha}, the APT--Lindley by \cite{Rosa2021alpha}, the APT--quasi Lindley by \cite{Udoudo2021alpha}, the APT--extended power Lindley by \cite{Eissa2022alpha}, the APT--generalized Weibull by \cite{Aga2024alpha}, the new APT--beta by \cite{Agegnehu2024new}, the generalized APT--truncated Lomax by \cite{El-Sayed2025generalized}, and the modified APT--inverse power Lomax by \cite{Akueson2025modified} provides excellent fits to lifetime data with minimal complexity penalty, and mostly outperforms other compared models in their works. Other researchers such as Ref. \cite{Afify2020heavytailed, Eghwerido2020alpha, Ali2021alphapower, Bulut2021alpha, Eghwerido2021alpha, Poonia2022alpha, Pimsap2023alpha, Qura2023novel, Ehiwario2023alpha} have also employed the transformation to propose the AP--exponentiated exponential, the AP--Gompertz, the AP–exponentiated inverse Rayleigh, the AP--Lomax, the AP--Teissier, the AP--exponentiated Teissier, the AP--exponentiated Pareto, the AP--Power Lomax, and the AP--Topp--Leone.

In other cases, the APT generator has been combined with other existing “--G” family to form new families of distributions. Using the dataset from \cite{Linhart1986statistik} on failure times of an airplane air-conditioning system, and the dataset from \cite{Aarset1987identify} on failure times of 50 devices, \cite{Elbatal2021alpha} examined the flexibility of the alpha power Transformed Weibull--G model by applying it to the Exponential, Rayleigh and Lindley distributions. Other examples are the AP--Marshall--Olkin--G \cite{Eghwerido2021alpha_b}, the extended AP--transformed--G \cite{Ahmad2021extended}, the AP--odd generalized exponential--G \cite{ELBATAL2022alpha}, the AP--Poisson--G \cite{Gbenga2022alpha}, the AP–Rayleigh–G \cite{Agu2022alpha}, the AP–Topp--Leone--G \cite{Eghwerido2022statistical}, the exponentiated generalized AP--G \cite{ElSherpieny2022exponentiated}, the modified AP--transformed Topp--Leone--G \cite{Ocloo2025modified}. Also, the AP--type II--G by \cite{Mohsin2025alpha}, the Novel AP--X by \cite{Bhat2025novel}, and the sine AP--G by \cite{Alghamdi2025sine} are examples of recent modifications of the APT generator.

The vast application of APT has proven it to be a powerful yet relatively parsimonious method for extending distributions. In this article, we propose a new family of distributions for generating lifetime models called the alpha power Harris--G (\APHG). The Harris family by \cite{Aly2011new} is categorized as a generalization of the Marshall--Olkin class when taking into account the probability generating function (PGF) of the Harris distribution \cite{Harris1948branching}. The Harris--G family stands out for its ability to unify skewness and tail-weight adjustments through its CDF;
\begin{equation}
G(x;c,\upsilon,\Pi) = 1 - \frac{c^{1/\upsilon}\,\overline{J}(x;\Pi)}{\left( 1 - \overline{c}\,\overline{J}(x;\Pi)^{\upsilon} \right)^{1/\upsilon}}, \qquad x>0,\; c>0,\; \upsilon>0,
\label{Eq:2}
\end{equation}

where $\overline{c}=1-c$, $\overline{J}(x;\Pi)=1-J(x;\Pi)$, and $\Pi$ denotes the parameter vector of the baseline distribution $J(x;\Pi)$.

Therefore, the newly proposed family of continuous distributions unifies the Harris
process with the alpha power transformation, enabling the capture of diverse data
patterns with fewer parameters while retaining analytical tractability and improving
the modelling of lifetime, reliability, and survival data.

The CDF of the proposed APH--G family is given by:
\begin{equation}
F(x;K) = \begin{cases} \displaystyle \frac{\alpha^{\,1 - \frac{c^{1/\upsilon}\,\overline{J}(x;\Pi)} {\left(1-\overline{c}\,\overline{J}(x;\Pi)^{\upsilon}\right)^{1/\upsilon}}} - 1}{\alpha - 1}, & \alpha>0,\; \alpha \ne 1, \\[12pt] \displaystyle 1 - \frac{c^{1/\upsilon}\,\overline{J}(x;\Pi)}{ \left(1-\overline{c}\,\overline{J}(x;\Pi)^{\upsilon}\right)^{1/\upsilon} }, & \alpha = 1, \end{cases} 
\label{Eq:3}
\end{equation}
and its corresponding pdf is given as
\begin{equation}
f(x;K) = \begin{cases} \displaystyle \frac{\log(\alpha)}{\alpha-1}\; \frac{c^{1/\upsilon}\, j(x;\Pi)}{\left(1-\overline{c}\,\overline{J}(x;\Pi)^{\upsilon}\right)^{1+1/\upsilon}} \;\alpha^{\,1 -\frac{c^{1/\upsilon}\,\overline{J}(x;\Pi)}{\left(1-\overline{c}\,\overline{J}(x;\Pi)^{\upsilon}\right)^{1/\upsilon}}}, & \alpha>0,\; \alpha\ne 1, \\[14pt] \displaystyle \frac{c^{1/\upsilon}\, j(x;\Pi)}{\left(1-\overline{c}\,\overline{J}(x;\Pi)^{\upsilon}\right)^{1+1/\upsilon}}, & \alpha = 1, \end{cases} \label{Eq:4}
\end{equation}
where $K = \alpha, c, \upsilon, \Pi$ ; $x \in \mathbb{R}$, $\alpha, c, \upsilon > 0$, and $\Pi$ parameterizes the baseline distribution in Eqs. (\ref{Eq:3}) and (\ref{Eq:4}). 

The survival function (SF), hazard rate function (HRF), reversed hazard rate function (RHRF) and cumulative hazard rate function (CHRF) of \APHG\ family are given respectively as,

\begin{equation}
S(x;K) = \frac{\alpha - \alpha^{\,1 - \frac{c^{1/\upsilon} \Jbar(x;\Pi)}{ (1-\cbar \Jbar(x;\Pi)^{\upsilon})^{1/\upsilon} }}}{\alpha-1}
\label{Eq:5}
\end{equation}

\begin{equation}
h(x;K) = \frac{\log(\alpha)c^{1/\upsilon} j(x;\Pi) \alpha^{\,1 - \frac{c^{1/\upsilon} \Jbar(x;\Pi)}{ (1-\cbar \Jbar(x;\Pi)^{\upsilon})^{1/\upsilon} }}}{ (1-\cbar \Jbar(x;\Pi)^{\upsilon})^{1+1/\upsilon} } \frac{1}{(\alpha - \alpha^{\,1 - \frac{c^{1/\upsilon} \Jbar(x;\Pi)}{ (1-\cbar \Jbar(x;\Pi)^{\upsilon})^{1/\upsilon} }})}
\label{Eq:6}
\end{equation}

\begin{equation}
r(x;K) = \frac{\log(\alpha) \alpha^{\,1 - \frac{c^{1/\upsilon} \Jbar(x;\Pi)}{ (1-\cbar \Jbar(x;\Pi)^{\upsilon})^{1/\upsilon} }}}{ \alpha^{\,1 - \frac{c^{1/\upsilon} \Jbar(x;\Pi)}{ (1-\cbar \Jbar(x;\Pi)^{\upsilon})^{1/\upsilon} }} - 1} \frac{c^{1/\upsilon} j(x;\Pi)}{(1-\cbar \Jbar(x;\Pi)^{\upsilon})^{1+1/\upsilon}}
\label{Eq:7}
\end{equation}

\begin{equation}
H(x;K) = -\log S(x;K) = \log\left(\frac{\alpha-1}{\alpha - \alpha^{\,1 - \frac{c^{1/\upsilon} \Jbar(x;\Pi)}{ (1-\cbar \Jbar(x;\Pi)^{\upsilon})^{1/\upsilon} }}}\right)
\label{Eq:8}
\end{equation}
Where $K = \alpha, c, \upsilon, \Pi$ ; $x \in \mathbb{R}$, $\alpha, c, \upsilon > 0$, $\alpha \ne 1$ and $\Pi$ parameterizes the baseline distribution in Eqs. (\ref{Eq:5}), (\ref{Eq:6}), (\ref{Eq:7}), and (\ref{Eq:8}).

Another unique objective of this study, is the introduction of a new lifetime model named as alpha power Harris Burr XII (\APHBXII) distribution. The proposed distribution is generated by adding the Burr XII distribution as a baseline in the newly generated \APHG\ family of distributions. This proposed distribution is much flexible and able to model real phenomena with decreasing, unimodal or bathtub hazard rate function (HRF). Several statistical features of the suggested distribution were derived and computed. Monte Carlo simulation study was conducted and the MLE method was employed to assess the behavior of the parameters of the newly derived model. Three lifetime datasets were examined to demonstrate the flexibility and adaptability of the proposed distribution over four other competing distributions.

The remainder of the article is structured as follows. Section \ref{Sec:2} presents the formulation and definition of some properties of the \APHBXII\ distribution, as well as density and the hazard function plots. Section \ref{Sec:3} further illustrate some other important statistical and mathematical features of the \APHBXII\ model. In Section \ref{Sec:4}, the MLE method was used to estimate the numerals of the unknown parameters of the new model. In Section \ref{Sec:5}, we presented the Monte Carlo simulation results to understand the asymptotic behavior of the maximum likelihood estimates of \APHBXII\ distribution. In Section \ref{Sec:6}, we presented the analysis of the lifetime data sets, and stated our concluding remarks in Section \ref{Sec:7}.

\section{The Alpha Power Harris Burr XII (APHBXII) distribution}\label{Sec:2} 
In this section, we introduce the Burr XII distribution, which has its CDF given as:
\begin{equation}
J(x;\phi,\eta)=1-(1+x^{\phi})^{-\eta}, \qquad x>0
\label{Eq:9}
\end{equation}
and the corresponding PDF given by
\begin{equation}
j(x;\phi,\eta)=\phi\eta x^{(\phi-1)} (1+x^{\phi})^{-\eta-1}, \qquad x>0;\phi,\eta>0
\label{Eq:10}
\end{equation}
This serves as a baseline for Eq. (\ref{Eq:3}) to establish the newly derived \APHBXII\ distribution. 

A random variable $X$ is said to follow the \APHBXII\ model, if its CDF is derived by inserting Eq. (\ref{Eq:9}) in Eq. (\ref{Eq:3}).
Let $T(x)=\frac{c^{1/\upsilon} (1+x^\phi)^{-\eta}}{(1-\cbar(1+x^\phi)^{-\eta\upsilon})^{1/\upsilon}}$, then
\begin{equation}
F(x;\aleph)=\frac{\alpha^{1-T(x)} - 1}{\alpha-1}, \qquad x>0
\label{Eq:11}
\end{equation}
The PDF corresponding to \APHBXII\ model is given as
\begin{equation}
f(x;\aleph)=\left(\frac{\log\alpha}{\alpha-1}\right) \frac{c^{1/\upsilon} \phi\eta x^{(\phi-1)} (1+x^{\phi})^{-\eta-1}}{\left(1-\cbar(1+x^{\phi})^{-\eta\upsilon}\right)^{1+1/\upsilon}} \alpha^{1-T(x)}
\label{Eq:12}
\end{equation}

where $\aleph = \alpha, c, \upsilon,\phi, \eta$ ; $x \in \mathbb{R}$, $\alpha, c, \upsilon,\phi, \eta > 0$, $\alpha \ne 1$ and $\phi, \eta$ parameterizes the Burr XII distribution.

The survival function (SF), hazard rate function (HRF), reversed hazard rate function (RHRF), and cumulative hazard rate function (CHRF) of the \APHBXII\ distribution are given, respectively, as follows:

\begin{equation}
S(x;\aleph)=\frac{\alpha-\alpha^{1-T(x)}}{\alpha-1}, \qquad x>0
\label{Eq:13}
\end{equation}

\begin{equation}
h(x;\aleph)=\frac{\log\alpha c^{1/\upsilon} \phi\eta x^{(\phi-1)} (1+x^{\phi})^{-\eta-1}}{\left(\alpha-\alpha^{1-T(x)}\right) \left(1-\cbar(1+x^{\phi})^{-\eta\upsilon}\right)^{1+1/\upsilon}} \alpha^{1-T(x)}
\label{Eq:14}
\end{equation}

\begin{equation}
r(x;\aleph)=\frac{\log\alpha \alpha^{1-T(x)}}{\alpha^{1-T(x)} - 1} \frac{c^{1/\upsilon} \phi\eta x^{(\phi-1)} (1+x^{\phi})^{-\eta-1}}{\left(1-\cbar(1+x^{\phi})^{-\eta\upsilon}\right)^{1+1/\upsilon}}
\label{Eq:15}
\end{equation}

\begin{equation}
H(x;\aleph)=-\log S(x;\aleph)=\log\left(\frac{\alpha-1}{\alpha-\alpha^{1-T(x)}}\right)
\label{Eq:16}
\end{equation}

\hyperref[Fig:1]{Fig. 1} displays the density plots of the \APHBXII\ distribution with different sets of parameter values, while \hyperref[Fig:2]{Fig. 2} show the plots of different curves for hazard rate function of the model. \hyperref[Fig:1]{Fig. 1} revealed that the PDF of the \APHBXII\ model is unimodal, right skewed with heavy and light tail. On the other hand, \hyperref[Fig:2]{Fig. 2} show increasing, decreasing, unimodal and bathtub shapes which demonstrate the flexibility of \APHBXII\ distribution to model various kinds of dataset with complex structures.

\begin{figure}[ht!]
    \centering
    \includegraphics[width=\textwidth]{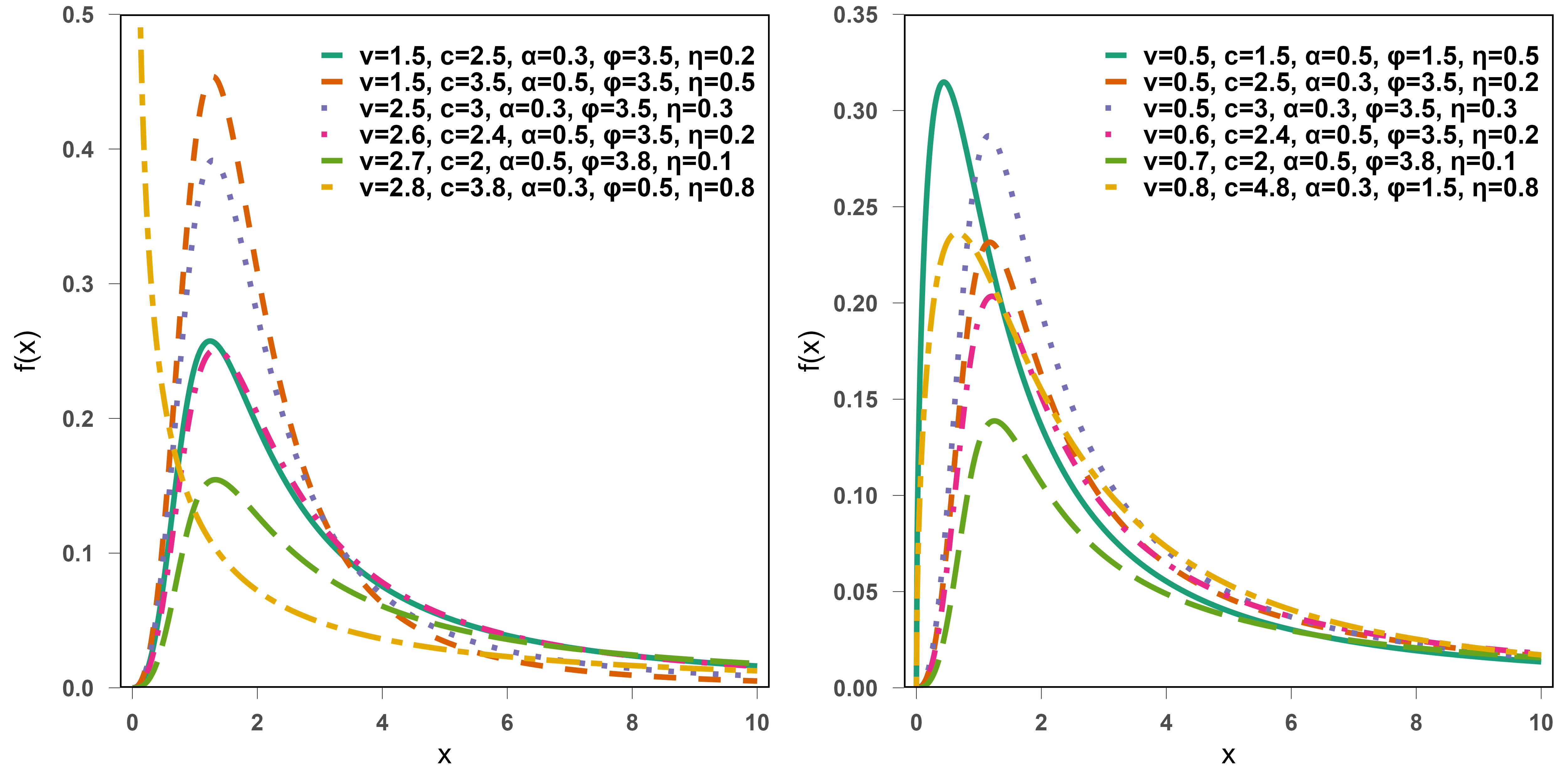} 
    \caption{Density curves of the APHBXII model. The plot displays the probability density function of the APHBXII model under different parameter settings, illustrating its flexibility in shape and tail behavior.}
    \label{Fig:1}
\end{figure}

\begin{figure}[ht!]
    \centering
    \includegraphics[width=\textwidth]{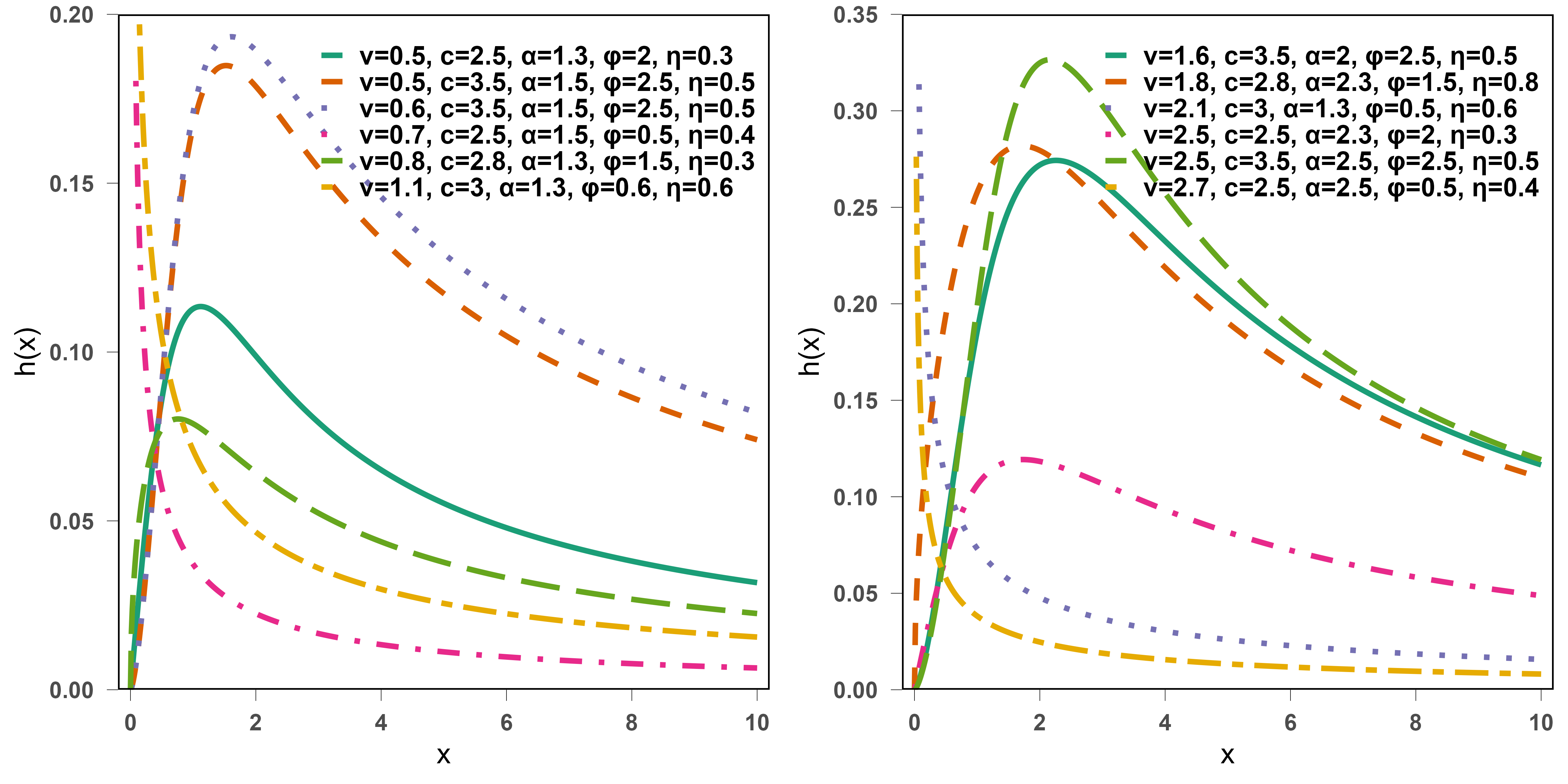}
    \caption{Hazard rate behavior of the APHBXII model. This figure shows examples of increasing, decreasing, and bathtub-shaped hazard functions produced by the APHBXII model.}
    \label{Fig:2}
\end{figure}

\section{Probabilistic features of APHBXII distribution}\label{Sec:3} 
In this section, we discuss some probabilistic features of the \APHBXII\ model.

\subsection{Quantile functions and applications}\label{subsec:3.1}

The quantile function for the \APHBXII\ distribution is defined as $Q(u;\aleph)=F^{-1} (u;\aleph)$, $u \in (0,1)$.

The $u^{\text{th}}$ quantile ($x_u$) of the \APHBXII\ model is derived by inverting the relation $F(x_u)=u$, hence from Eq. (\ref{Eq:11}), we have
\begin{equation}
u=\frac{\alpha^{1-T(x)} - 1}{\alpha-1}
\label{Eq:17}
\end{equation}

After performing algebraic calculations, we derived an expression for the quantile function of the \APHBXII\ model as follows:
\begin{equation}
x_u=\left[\left(\frac{1-\frac{\log[1+u(\alpha-1)]}{\log(\alpha)}}{(c+\cbar(1-\frac{\log[1+u(\alpha-1)]}{\log(\alpha)})^\upsilon)}\right)^{-1/(\eta\upsilon)} -1\right]^{1/\phi}
\label{Eq:18}
\end{equation}

Using Eq. (\ref{Eq:18}), we derive an expression for the lower quartile ($q_1$), median ($q_2$) and the upper quartile ($q_3$) of the \APHBXII\ distribution, and they are respectively given as
\begin{equation}
q_1=\left[\left(\frac{1-\frac{\log[1+0.25(\alpha-1)]}{\log(\alpha)}}{(c+\cbar(1-\frac{\log[1+0.25(\alpha-1)]}{\log(\alpha)})^\upsilon)}\right)^{-1/(\eta\upsilon)} -1\right]^{1/\phi}
\label{Eq:19}
\end{equation}

\begin{equation}
q_2=\left[\left(\frac{1-\frac{\log[1+0.5(\alpha-1)]}{\log(\alpha)}}{(c+\cbar(1-\frac{\log[1+0.5(\alpha-1)]}{\log(\alpha)})^\upsilon)}\right)^{-1/(\eta\upsilon)} -1\right]^{1/\phi}
\label{Eq:20}
\end{equation}

\begin{equation}
q_3=\left[\left(\frac{1-\frac{\log[1+0.75(\alpha-1)]}{\log(\alpha)}}{(c+\cbar(1-\frac{\log[1+0.75(\alpha-1)]}{\log(\alpha)})^\upsilon)}\right)^{-1/(\eta\upsilon)} -1\right]^{1/\phi}
\label{Eq:21}
\end{equation}

Eq. (\ref{Eq:18}) can be used to generate random numbers for the \APHBXII\ distribution. 

In \hyperref[Table:1]{Table 1}, we presented the values of the lower, median and upper quartiles of the \APHBXII\ model using some assumed values of the parameters, with $\alpha$ and $\upsilon$ fixed at 0.5 and 1.2 respectively. We also show the flexibility of skewness and kurtosis of \APHBXII\ distribution by determining the numerical values for Galton coefficient of skewness ($S$) \cite{galton1883enquiries} defined as $S=(Q(6/8)-2Q(4/8)+Q(2/8))/(Q(6/8)-Q(2/8) )$, which measures the degree of the long tail and the Moors coefficient of kurtosis ($K$) \cite{moors1988quantile} defined as $K=(Q(7/8)-Q(5/8)+Q(3/8)-Q(1/8))/(Q(6/8)-Q(2/8) )$, which determines the degree of tail heaviness of the \APHBXII\ distribution for the various assumed values of the parameters.

\begin{table*}[ht]
    \centering
     \caption{Values of $q_1, q_2, q_3, K, S$ of the APHBXII distribution.}
       \label{Table:1} 
    \begin{tabular}{@{}c *{5}{r}@{}}
    \toprule
    $c,\eta,\phi$ & $q_1$ & $q_2$ & $q_3$ & K & S
    \\
    \midrule
    0.3,0.3,1.2 & 0.3290 & 0.9550 & 3.6196 & 4.0068 & 0.6195 \\
    0.5,0.5,1.5 & 0.4029 & 0.9049 & 2.2821 & 2.2964 & 0.4657 \\
    0.8,0.8,2.0 & 0.4951 & 0.8729 & 1.5771 & 1.6170 & 0.3017 \\
    1.2,1.2,2.5 & 0.5580 & 0.8494 & 1.2738 & 1.3910 & 0.1858 \\
    2.0,2.0,3.5 & 0.6418 & 0.8373 & 1.0589 & 1.2880 & 0.0625 \\
    2.5,2.5,3.8 & 0.6555 & 0.8254 & 1.0052 & 1.2787 & 0.0281 \\
    4.0,4.0,5.0 & 0.7059 & 0.8219 & 0.9298 & 1.2888 & -0.0361 \\
    4.5,4.5,5.8 & 0.7356 & 0.8346 & 0.9237 & 1.2975 & -0.0529 \\
    5.5,5.5,7.0 & 0.7672 & 0.8462 & 0.9146 & 1.3117 & -0.0726 \\
    10.0,10.0,10.0 & 0.8095 & 0.8562 & 0.8944 & 1.3466 & -0.1018 \\
    \bottomrule
    \end{tabular}
\end{table*}

\subsection{Key mathematical expansions}\label{subsec:3.2}

In this subsection, we introduce some relevant mathematical expansions used in the derivation of some statistical properties of the \APHBXII\ model.

\subsubsection{Binomial expansion}\label{ssubsec:3.2.1}

The binomial expansion, for $i>0$ and $|m|<1$ is given by
\begin{equation}
(1-m)^i=\sum_{j=0}^i (-1)^j \binom{i}{j} m^j
\label{Eq:22}
\end{equation}
therefore, applying the binomial series expansion in Eq. (\ref{Eq:22}) to Eqs. (\ref{Eq:11}) and (\ref{Eq:12})respectively, we obtain the CDF of \APHBXII\ as,

\begin{equation}
F(x;\aleph)=\frac{1}{\alpha-1} \sum_{i=1}^\infty \frac{(\log\alpha)^i}{i!} \sum_{j=0}^i \binom{i}{j} (-1)^j c^{j/\upsilon} \left\{\frac{(1+x^{\phi})^{-\eta}}{(1-\cbar(1+x^{\phi})^{-\eta\upsilon})^{1/\upsilon}}\right\}^j
\label{Eq:23}
\end{equation}
and the corresponding PDF as,
\begin{equation}
f(x;\aleph)=\frac{\log\alpha}{\alpha-1} \sum_{i=0}^\infty \frac{(\log\alpha)^i}{i!} \sum_{j=0}^i \binom{i}{j} (-1)^j \frac{c^{((1+j)/\upsilon)} \phi\eta x^{(\phi-1)} (1+x^{\phi})^{-\eta-\eta j-1}}{\left(1-\cbar(1+x^{\phi})^{-\eta\upsilon}\right)^{1+1/\upsilon+j/\upsilon}}
\label{Eq:24}
\end{equation}

\subsubsection{Burr-XII kernel decomposition (triple series)}\label{ssubsec:3.2.2}
\begin{equation}
(1-\partial)^{-z}=\sum_{k=0}^\infty \binom{z+k-1}{k} \partial^k
\label{Eq:25}
\end{equation}
Applying the series given in Eqs. (\ref{Eq:25}) to (\ref{Eq:24}), we derive the mixture representation of the density function of \APHBXII\ distribution as
\begin{equation}
\label{Eq:26}
\begin{split}
f(x) &= \sum_{i=0}^\infty \sum_{j=0}^i \sum_{k=0}^\infty 
\underbrace{\left[\frac{\log\alpha}{\alpha-1} \frac{(\log\alpha)^i}{i!} \binom{i}{j} (-1)^j c^{\frac{1+j}{\upsilon}} (\bar{c})^k \binom{\frac{1+j}{\upsilon}+k}{k} \frac{\phi\eta}{\eta(j+\upsilon k+1)}\right]}_{\omega_{i,j,k}} \\
&\quad \times \underbrace{\eta(j+\upsilon k+1)x^{\phi-1} (1+x^\phi )^{-(\eta(j+\upsilon k+1)+1)}}_{g_{\phi,\eta(j+\upsilon k+1)} (x)}
\end{split}
\end{equation}
Summarily, Eq. (\ref{Eq:26}) becomes
\begin{equation}
f(x;\aleph)=\sum_{i=0}^\infty \sum_{j=0}^i \sum_{k=0}^\infty \omega_{i,j,k} g_{\phi,\eta(j+\upsilon k+1)} (x)
\label{Eq:27}
\end{equation}
where $g_{\phi,\eta(j+\upsilon k+1)} (x)$ represent the \BXII\ density with parameters $\phi$ and $\eta(j+\upsilon k+1)$.

Subsequently, by integrating Eq. (\ref{Eq:27}), we derive the mixture representation of the distribution function as,
\begin{equation}
F(x;\aleph)=\sum_{i=0}^\infty \sum_{j=0}^i \sum_{k=0}^\infty \omega_{i,j,k} G_{\phi,\eta(j+\upsilon k+1)} (x)
\label{Eq:28}
\end{equation}
where $G_{\phi,\eta(j+\upsilon k+1)} (x)$ is the CDF of the \BXII\ density with parameters $\phi$ and $\eta(j+\upsilon k+1)$.

Hence, the CDF and PDF of the mixture representation of \APHBXII\ model as expressed in  Eqs. (\ref{Eq:28}) and (\ref{Eq:28}) will be used to derived the expression of some basic properties of the model.

\subsection{Moments}\label{subsec:3.3}

In this subsection, ordinary and incomplete moments of the \APHBXII\ model were derived and discussed in detail.

\subsubsection{Ordinary moment of APHBXII distribution}\label{ssubsec:3.3.1}

\begin{theorem}
\begin{definition}If $X$ has the $\mathrm{APHBXII}(x;\aleph)$ $\forall$ $(i,j,k)$ $\omega_{i,j,k}\neq0$ : $r<\phi\eta (j+\upsilon k+1)$. A convenient sufficient condition (since $j,k\geq0$) is $0\leq r< \phi\eta$, then the raw $r$-th moment of APHBXII distribution is
\begin{equation*} 
\mu_r' = \sum_{i=0}^\infty \sum_{j=0}^i \sum_{k=0}^\infty \omega_{i,j,k} \frac{\eta(j+\upsilon k+1)}{\phi} B\left\{\left(\frac{r}{\phi}+1\right),\left(\eta(j+\upsilon k+1)-\frac{r}{\phi}\right)\right\}
\end{equation*}
\end{definition}
\end{theorem}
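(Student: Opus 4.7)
The plan is to begin from the definition $\mu_r' = \int_0^\infty x^r f(x;\aleph)\,dx$ and substitute the mixture representation of the \APHBXII{} density established in Eq.~(\ref{Eq:27}). That representation expresses $f(x;\aleph)$ as a triple sum of \BXII{} densities $g_{\phi,\eta(j+\upsilon k+1)}(x)$ with weights $\omega_{i,j,k}$, so once the sum and integral can be interchanged, $\mu_r'$ reduces to a weighted sum of \BXII{} raw moments with shape parameters $\phi$ and $\tau_{j,k} := \eta(j+\upsilon k+1)$.

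The second step is to evaluate the $r$-th moment of a single Burr~XII component. I would apply the substitution $t = x^\phi/(1+x^\phi)$, which maps $(0,\infty)$ bijectively onto $(0,1)$ and converts the kernel $\phi\tau_{j,k} x^{\phi-1}(1+x^\phi)^{-\tau_{j,k}-1}$ into a Beta kernel in $t$. The integral then reduces to $\tau_{j,k}\int_0^1 t^{r/\phi}(1-t)^{\tau_{j,k}-r/\phi-1}\,dt = \tau_{j,k}\,B\!\left(\tfrac{r}{\phi}+1,\,\tau_{j,k}-\tfrac{r}{\phi}\right)$, exactly the summand claimed in the theorem. The Beta integral is finite precisely when $\tau_{j,k} > r/\phi$, i.e.\ $r < \phi\eta(j+\upsilon k+1)$; since the infimum over $j,k\geq 0$ is attained at $j=k=0$, the uniform hypothesis $0 \leq r < \phi\eta$ secures finiteness for every admissible triple $(i,j,k)$.

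The main obstacle is not the integral evaluation but justifying the swap of integration with the triple sum, because the $(-1)^j$ factor embedded in $\omega_{i,j,k}$ rules out a direct monotone interchange. I would pass to absolute values and invoke Tonelli's theorem: bounding $\binom{i}{j}\leq 2^i$, using the polynomial-in-$k$ growth of the generalized binomial coefficients from Eq.~(\ref{Eq:25}), and exploiting the factorial damping of $(\log\alpha)^i/i!$ together with the $\tau_{j,k}^{-r/\phi-1}\,\Gamma(r/\phi+1)$ tail of the Beta factor, one shows that $\sum_{i,j,k}|\omega_{i,j,k}|\,\tau_{j,k}\,B(r/\phi+1,\tau_{j,k}-r/\phi)$ is finite under $r < \phi\eta$. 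Once this dominated bound is in place, Fubini restores the original signs, and inserting $\tau_{j,k} = \eta(j+\upsilon k+1)$ back into the summand delivers the closed form stated in the theorem.
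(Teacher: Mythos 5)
Your proposal is correct and follows essentially the same route as the paper: insert the mixture representation of Eq.~(\ref{Eq:27}) into the moment integral and reduce each Burr~XII component to a Beta integral (your single substitution $t=x^{\phi}/(1+x^{\phi})$ is just the composition of the paper's two substitutions $y=x^{\phi}$ and $z=y/(1+y)$). Your added Tonelli/Fubini justification for interchanging the triple sum with the integral is a welcome extra rigor that the paper omits, but it does not change the argument's structure.
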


\begin{proof}
\begin{equation}
\mu_r' = \mathbb{E}[X^r]=\int_0^\infty x^r f(x;\aleph)dx
\label{Eq:29}
\end{equation}
inserting Eq. (\ref{Eq:27}) into (\ref{Eq:29}), we obtain
\begin{equation}
\mu_r' = \sum_{i=0}^\infty \sum_{j=0}^i \sum_{k=0}^\infty \omega_{i,j,k} \eta(j+\upsilon k+1) \int_0^\infty x^{(r+\phi-1)} (1+x^{\phi})^{-((\eta(j+\upsilon k+1)+1))} dx
\label{Eq:30}
\end{equation}
taking $y=x^{\phi}$, $x=y^{(1/\phi)}$, $dx=\frac{1}{\phi} y^{(1/\phi-1)} dy$, and substituting it in Eq. (\ref{Eq:30}), we get
\begin{equation}
\mu_r' = \sum_{i=0}^\infty \sum_{j=0}^i \sum_{k=0}^\infty \omega_{i,j,k} \frac{\eta(j+\upsilon k+1)}{\phi} \int_0^\infty y^{(r/\phi)} (1+y)^{-((\eta(j+\upsilon k+1)+1))} dy
\label{Eq:31}
\end{equation}
Furthemore, taking $z=y/(1+y)$, $y=z/(1-z)$, $dy=(1-z)^{-2} dz$ and inserting it in Eq. (\ref{Eq:31}), we obtain
\begin{equation}
\mu_r' = \sum_{i=0}^\infty \sum_{j=0}^i \sum_{k=0}^\infty \omega_{i,j,k} \frac{\eta(j+\upsilon k+1)}{\phi} \int_0^1 z^{(r/\phi)} (1-z)^{(\eta(j+\upsilon k+1)-r/\phi-1)} dz
\label{Eq:32}
\end{equation}
which after the beta transform, results in the expression for the $r$-th moment of the APHBXII distribution given by; 
\begin{equation}
\mu_r'=\sum_{i=0}^\infty \sum_{j=0}^i \sum_{k=0}^\infty \omega_{i,j,k} \frac{\eta(j+\upsilon k+1)}{\phi} B\left\{\left(\frac{r}{\phi}+1\right),\left(\eta(j+\upsilon k+1)-\frac{r}{\phi}\right)\right\}
\label{Eq:33}
\end{equation}
where $B(q,p)=\int_0^1 z^{(q-1)} (1-z)^{(p-1)} dz$ is a complete beta function $p>0,q>0$.
\end{proof}

The expression of the mean ($\mu_1'$) of $X$ is obtained by taking $r=1$, and it is given as
\begin{equation}
\mu_1'=\sum_{i=0}^\infty \sum_{j=0}^i \sum_{k=0}^\infty \omega_{i,j,k} \frac{\eta(j+\upsilon k+1)}{\phi} B\left\{\left(\frac{1}{\phi}+1\right),\left(\eta(j+\upsilon k+1)-\frac{1}{\phi}\right)\right\}
\label{Eq:34}
\end{equation}
The variance of $X$ can be further derived using the relation $\mu_2=\mu_2'-(\mu_1')^2$. Lastly, the \rth\ central moment and \rth\ cumulant of $X$ are defined respectively as,
\begin{align*}
\mu_r&=\E\{(X-\mu)^r \}=\sum_{q=0}^r \binom{r}{q} \mu_{r-q}' (-1)^q \mu^q \\
\kappa_r&=\mu_r'-\sum_{q=1}^{r-1} \binom{r-1}{q-1} \kappa_q \mu_{r-q}'
\end{align*}
with $\mu_1'=\kappa_1=\mu$. 

In \hyperref[Table:2]{Table 2}, we presented the results of the first six moments, variance ($\sigma^2$), the coefficient of variation ($\mathrm{CV}=\sigma/\mu=((\mu_2')/\mu^2 -1)^{(1/2)}$), coefficient of skewness ($S_{sk}$) and the coefficient of kurtosis ($S_{ku}$) computed using a fixed value of parameters $\phi=2.5$, $\eta=3.0$ and varying values of $\alpha,\upsilon$, and $c$. It can be observed from \hyperref[Table:2]{Table 2} that as the value of parameters $\phi$ and $\eta$ remains constant, the values of the coefficient of variation decreases as the values of $\alpha$, $c$ and $\upsilon$ increases.

\begin{table}[ht]
    \centering
     \small
    \caption{First six moments, $\sigma^2$, CV, $(S_{sk})$, $(S_{ku})$ of APHBXII distribution.}
    \label{Table:2}
    \begin{tabular}{@{}l cccc@{}}
    \toprule
    Moments & $\alpha=0.5, \upsilon=1.0$ & $\alpha=1.5, \upsilon=1.5$ & $\alpha=2.5, \upsilon=2.0$ & $\alpha=4.5, \upsilon=5.5$ \\
    \cmidrule(lr){2-2} \cmidrule(lr){3-3} \cmidrule(lr){4-4} \cmidrule(lr){5-5}
     & c=1.5 & c=2.5 & c=3.0 & c=3.5 \\
    \midrule
    $\mu_1'$ & 0.6447 & 0.8145 & 0.8626 & 0.8448 \\
    $\mu_2'$ & 0.5271 & 0.7940 & 0.8731 & 0.8353 \\
    $\mu_3'$ & 0.5305 & 0.9107 & 1.0273 & 0.9646 \\
    $\mu_4'$ & 0.6580 & 1.2413 & 1.4253 & 1.3218 \\
    $\mu_5'$ & 1.0391 & 2.0912 & 2.4310 & 2.2392 \\
    $\mu_6'$ & 2.3079 & 4.8374 & 5.6711 & 5.2087 \\
    $\sigma$ & 0.3339 & 0.3614 & 0.3591 & 0.3488 \\
    CV & 0.5180 & 0.4437 & 0.4163 & 0.4129 \\
    $S_{kw}$ & 1.2600 & 1.0856 & 1.1162 & 1.2616 \\
    $S_{ku}$ & 6.9392 & 6.7090 & 7.0704 & 7.4966 \\
    \bottomrule
    \end{tabular}
\end{table}

\subsubsection{Incomplete moment}\label{ssubsec:3.3.2}

\begin{theorem}
\begin{definition}
If $X$ follows the \APHBXII\ distribution and for any $t>0$ we define $X_t=X$ if $X\leq t$, and $X_t=0$ if otherwise, then the \rth\ incomplete moment of $X$ is equivalently expressed as
\begin{equation*}
\omega_r (t)= \sum_{i=0}^\infty \sum_{j=0}^i \sum_{k=0}^\infty \omega_{i,j,k} \frac{\eta(j+\upsilon k+1)}{\phi} \int_0^{z_t} z^{(r/\phi)} (1-z)^{(\eta(j+\upsilon k+1)-r/\phi-1)} dz
\end{equation*}
\end{definition}
\end{theorem}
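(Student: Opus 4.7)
The plan is to mirror the derivation used for the ordinary $r$-th moment, but truncate the range of integration at $t$ and track the upper limit through the same two substitutions used in that proof. Starting from the definition
\begin{equation*}
\omega_r(t) = \int_0^t x^r f(x;\aleph)\,dx,
\end{equation*}
I would substitute the mixture representation of the density from Eq.~(\ref{Eq:27}), so that after exchanging the order of summation and integration (justified because the integrand is nonnegative and the mixing weights $\omega_{i,j,k}$ are absolutely summable, by Tonelli's theorem) one gets
\begin{equation*}
\omega_r(t) = \sum_{i=0}^\infty \sum_{j=0}^i \sum_{k=0}^\infty \omega_{i,j,k}\,\eta(j+\upsilon k+1) \int_0^t x^{r+\phi-1}\,(1+x^\phi)^{-(\eta(j+\upsilon k+1)+1)}\,dx.
\end{equation*}

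Next I would apply the same change of variables used in the proof of the ordinary moment. First let $y=x^\phi$, so that $dx=\tfrac{1}{\phi}y^{1/\phi-1}dy$ and the upper limit becomes $y=t^\phi$; this reduces the integrand to $\tfrac{1}{\phi}y^{r/\phi}(1+y)^{-(\eta(j+\upsilon k+1)+1)}$. Then apply $z=y/(1+y)$, so that $y=z/(1-z)$ and $dy=(1-z)^{-2}dz$, which pulls the upper limit to $z_t = t^\phi/(1+t^\phi)$. Collecting exponents exactly as in Eqs.~(\ref{Eq:31})--(\ref{Eq:32}) yields the stated expression
\begin{equation*}
\omega_r(t)=\sum_{i=0}^\infty \sum_{j=0}^i \sum_{k=0}^\infty \omega_{i,j,k}\,\frac{\eta(j+\upsilon k+1)}{\phi}\int_0^{z_t} z^{r/\phi}(1-z)^{\eta(j+\upsilon k+1)-r/\phi-1}\,dz,
\end{equation*}
which is the incomplete-beta representation of the theorem.

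There is essentially no new analytic content beyond the ordinary-moment proof; the main thing to be careful about is the upper limit. I would note that the map $x\mapsto z_x=x^\phi/(1+x^\phi)$ is strictly increasing and bijective from $(0,\infty)$ onto $(0,1)$, so $z_t$ is well defined, and the limiting case $t\to\infty$ (giving $z_t\to 1$) recovers the complete beta function and hence Eq.~(\ref{Eq:33}), which serves as a consistency check. The only mild subtlety is the interchange of the triple sum with the integral; on the finite interval $[0,t]$ this is even easier than in the ordinary-moment case because each integral is finite for every $r\ge 0$, so no restriction analogous to $r<\phi\eta$ is needed and Tonelli applies immediately.
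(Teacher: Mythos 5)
Your proposal is correct and follows essentially the same route as the paper: insert the mixture representation from Eq.~(\ref{Eq:27}), truncate at $t$, apply $y=x^{\phi}$ and then $z=y/(1+y)$ to land on the incomplete beta integral with upper limit $z_t=t^{\phi}/(1+t^{\phi})$. Your added remarks on the upper-limit map and the $t\to\infty$ consistency check are sound, and your observation that the truncated integral converges without the $r<\phi\eta$ restriction is a fair refinement of the condition the paper states (though note the weights $\omega_{i,j,k}$ alternate in sign, so the sum--integral interchange really needs Fubini with absolute convergence rather than Tonelli).
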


\begin{proof}
\begin{equation}
\omega_r (t)=\E[(X_t )^r ]=\int_0^t x^r f(x;\aleph) dx
\label{Eq:35}
\end{equation}
inserting Eq. (\ref{Eq:27}) into (\ref{Eq:35}), we obtain
\begin{equation}
\omega_r (t)= \sum_{i=0}^\infty \sum_{j=0}^i \sum_{k=0}^\infty \omega_{i,j,k} \eta(j+\upsilon k+1) \int_0^t x^{(r+\phi-1)} (1+x^{\phi})^{-((\eta(j+\upsilon k+1)+1))} dx
\label{Eq:36}
\end{equation}
taking $y=x^{\phi}$, $x=y^{(1/\phi)}$, $dx=\frac{1}{\phi} y^{(1/\phi-1)}$ and substituting into Eq. (\ref{Eq:36}), we obtain 
\begin{equation}
\omega_r (t)= \sum_{i=0}^\infty \sum_{j=0}^i \sum_{k=0}^\infty \omega_{i,j,k} \frac{\eta(j+\upsilon k+1)}{\phi} \int_0^{(t^{\phi})} y^{(r/\phi)} (1+y)^{-((\eta(j+\upsilon k+1)+1))} dy
\label{Eq:37}
\end{equation}
Furthermore, taking $z=y/(1+y)$, $y=z/(1-z)$, $dy=(1-z)^{-2} dz$ and inserting it in Eq. (\ref{Eq:37}), we obtain
\begin{equation}
\omega_r (t)= \sum_{i=0}^\infty \sum_{j=0}^i \sum_{k=0}^\infty \omega_{i,j,k} \frac{\eta(j+\upsilon k+1)}{\phi} \int_0^{z_t} z^{(r/\phi)} (1-z)^{(\eta(j+\upsilon k+1)-r/\phi-1)} dz
\label{Eq:38}
\end{equation}
where $z_t=t^{\phi}/(1+t^{\phi})$ and $B_{z_t} (p,q) = \int_0^{z_t} u^{(p-1)} (1-u)^{(q-1)} du$ is an incomplete beta function with $p>0$, and $q>0$. This is valid $\forall$ $(i,j,k)$ where $\omega_{i,j,k}\neq0$ and $\eta (j+\upsilon k+1)-r/\phi >0$. A convenient sufficient condition (since $j,k\geq0$) is $0\leq r< \phi\eta$.

Therefore, the expression for the \rth\ incomplete moment of the \APHBXII\ distribution is obtained to be
\begin{equation}
\omega_r (t)= \sum_{i=0}^\infty \sum_{j=0}^i \sum_{k=0}^\infty \omega_{i,j,k} \frac{\eta(j+\upsilon k+1)}{\phi} B_{z_t} \left\{\left(\frac{r}{\phi}+1\right),\left(\eta(j+\upsilon k+1)-\frac{r}{\phi}\right)\right\}
\label{Eq:39}
\end{equation}
\end{proof}

Furthermore, taking $r=1$, we derive an expression for the first incomplete moment of \APHBXII\ distribution as
\begin{equation}
\omega_1 (t)= \sum_{i=0}^\infty \sum_{j=0}^i \sum_{k=0}^\infty \omega_{i,j,k} \frac{\eta(j+\upsilon k+1)}{\phi} B_{z_t} \left\{\left(\frac{1}{\phi}+1\right),\left(\eta(j+\upsilon k+1)-\frac{1}{\phi}\right)\right\}
\label{Eq:40}
\end{equation}

\subsection{Inequality and information measures}\label{subsec:3.4}

In this subsection, the Lorenz and Bonferroni curves, R\'enyi and Tsallis entropies of the \APHBXII\ model were introduced.

\subsubsection{Lorenz and Bonferroni curves for the APHBXII model}\label{ssubsec:3.4.1}

The expression given in Eq. (\ref{Eq:40}) above is important in the derivation of the Lorenz and Bonferroni curves which usefulness has been proven in insurance, reliability analysis and actuarial sciences.

The Bonferroni curve is defined by
\begin{equation}
B(p)=\frac{1}{p\mu} \int_0^u xf(x;\aleph)dx
\label{Eq:41}
\end{equation}
while the Lorenz curve is defined as follows
\begin{equation}
L(p)=\frac{1}{\mu} \int_0^u xf(x;\aleph)dx
\label{Eq:42}
\end{equation}
where $\mu=\E(X)$, which equate the first moment ($r=1$) and $u=F^{-1} (p)$. 

In our case, we obtain
\begin{equation}
\begin{split}
B(p)&=\frac{1}{p\mu} \left[\sum_{i=0}^\infty \sum_{j=0}^i \sum_{k=0}^\infty \omega_{i,j,k} \frac{\eta(j+\upsilon k+1)}{\phi} \right. \\
&\left. \qquad \times B_{z_t} \left\{\left(\frac{1}{\phi}+1\right),\left(\eta(j+\upsilon k+1)-\frac{1}{\phi}\right)\right\}\right]
\end{split}
\label{Eq:43}
\end{equation}
and 
\begin{equation}
\begin{split}
L(p)&=\frac{1}{\mu} \left[\sum_{i=0}^\infty \sum_{j=0}^i \sum_{k=0}^\infty \omega_{i,j,k} \frac{\eta(j+\upsilon k+1)}{\phi} \right. \\
&\left. \qquad \times B_{z_t} \left\{\left(\frac{1}{\phi}+1\right),\left(\eta(j+\upsilon k+1)-\frac{1}{\phi}\right)\right\}\right]
\end{split}
\label{Eq:44}
\end{equation}

\subsubsection{R\'enyi entropy}\label{ssubsec:3.4.2}

\cite{Renyi1961measures} introduced a significant mathematical formula that we used to obtain the entropy of the \APHBXII\ distribution given by
\begin{equation}
I_R^{\rho}=\frac{1}{1-\rho} \log\left[\int_0^\infty f_{\mathrm{APHBXII}} (x;\aleph)^{\rho} dx\right], \qquad \rho>0,\rho\neq1
\label{Eq:45}
\end{equation}
Inserting Eq. (\ref{Eq:12}) into (\ref{Eq:45}), we have
\begin{equation}
\begin{aligned}
I_R^{\rho}=\frac{1}{1-\rho} \log\left[\int_0^\infty \left(\left(\frac{\log\alpha}{\alpha-1}\right) \frac{c^{1/\upsilon} (\phi\eta) x^{(\phi-1)} (1+x^{\phi})^{-\eta-1}}{\left(1-\cbar(1+x^{\phi})^{-\eta\upsilon}\right)^{1+1/\upsilon}} \alpha^{1-T(x)}\right)^{\rho} dx\right]
\label{Eq:46}
\end{aligned}
\end{equation}
where,
\begin{align*}
\alpha^{\rho(1-T(x))} &= \alpha^{\rho} \sum_{m=0}^\infty \frac{(-\rho\log\alpha)^m}{m!} \frac{c^{m/\upsilon} (1+x^\phi)^{-m\eta}}{(1-\cbar(1+x^\phi)^{-\eta\upsilon})^{m/\upsilon}} 
\end{align*}
and
\begin{align*}
(1-\cbar(1+x^\phi)^{-\eta\upsilon})^{-\left(\rho(1+1/\upsilon)+m/\upsilon\right)}&=\sum_{q=0}^\infty \binom{\rho(1+1/\upsilon)+m/\upsilon+q-1}{q} \\
&\quad \times \cbar^q (1+x^\phi)^{-\eta\upsilon q}
\end{align*}
then, the integrand becomes $x^{(\rho(\phi-1))} (1+x^{\phi})^{-[\rho(\eta+1)+\eta m+\eta\upsilon q] }$.
\begin{equation}
W^{**}=\int_0^\infty x^{(\rho(\phi-1))} (1+x^{\phi})^{-[\rho(\eta+1)+\eta m+\eta\upsilon q]} dx
\label{Eq:47}
\end{equation}
taking $y=x^{\phi}$, $x=y^{(1/\phi)}$, $dx=\frac{1}{\phi} y^{(1/\phi-1)} dy$, and substituting it in (\ref{Eq:47}), we obtain
\begin{equation}
W^{**}=\frac{1}{\phi} \int_0^\infty y^{((\rho(\phi-1)+1-\phi)/\phi)} (1+y)^{-[\rho(\eta+1)+\eta m+\eta\upsilon q]} dy
\label{Eq:48}
\end{equation}
Furthermore, taking $y=z/(1-z)$, $dy=(1-z)^{-2} dz$ and inserting it in (\ref{Eq:48}), we obtain
\begin{equation}
W^{\text{**}} = \frac{1}{\phi} \int_0^1 z^{\frac{\rho(\phi-1)+1-\phi}{\phi}} (1-z)^{[\rho(\eta+1)+\eta m+\eta\upsilon q]-\frac{\rho(\phi-1)+1-\phi}{\phi}-2} \,dz
\label{Eq:49}
\end{equation}
Subsequently, after beta transform, Eq. (\ref{Eq:49}) becomes 
\begin{equation}
W^{\text{**}} = \frac{1}{\phi} B\left\{\left(\frac{\rho(\phi-1)+1}{\phi}\right),\left(\rho(\eta+1)+\eta m+\eta\upsilon q - \frac{\rho(\phi-1)+1-2\phi}{\phi}\right)\right\}
\label{Eq:50}
\end{equation}
Therefore,
\begin{equation}
\begin{split}
V^{\text{**}} &= \int_0^\infty f(x;\aleph)^\rho \,dx \\
&= \left(\frac{\log\alpha}{\alpha-1}\right)^\rho c^{\rho/\upsilon} (\phi\eta)^\rho \alpha^\rho \\
&\quad\times \sum_{m=0}^\infty \sum_{q=0}^\infty \frac{(-\rho\log\alpha)^m}{m!} c^{m/\upsilon} \binom{\rho(1+1/\upsilon)+m/\upsilon+q-1}{q} \bar{c}^q W^{\text{**}}
\end{split}
\label{Eq:51}
\end{equation}
Inserting Eq. (\ref{Eq:51}) into (\ref{Eq:45}), we derived an expression for the R\'enyi entropy of the \APHBXII\ model as
\begin{equation}
\begin{split}
I_R^\rho &= \frac{1}{1-\rho} \left[\rho\log\alpha+\rho\log\left(\frac{\log\alpha}{\alpha-1}\right)\right] + \frac{1}{1-\rho} \left[\log\left(c^{\rho/\upsilon} (\phi\eta)^\rho \right)\right] \\
&\quad+ \frac{1}{1-\rho} \left[\log\left(\sum_{m=0}^\infty \sum_{q=0}^\infty \frac{(-\rho\log\alpha)^m}{m!} c^{m/\upsilon} \binom{\rho(1+1/\upsilon)+m/\upsilon+q-1}{q}\right)\right] \\
&\quad+ \frac{1}{1-\rho} \left[\log\left(\bar{c}^q W^{\text{**}} \right)\right]
\end{split}
\label{Eq:52}
\end{equation}

\subsubsection{Tsallis entropy}\label{ssubsec:3.4.3}

The Tsallis entropy was first presented by \cite{Havrada1967quantification} and later developed by \cite{Tsallis1988possible}. The Tsallis entropy of the \APHBXII\ model can be expressed as
\begin{equation}
I_T^{\rho}=\frac{1}{\rho-1} \left[1-\int_0^\infty f_{\mathrm{APHBXII}} (x;\aleph)^{\rho} dx\right], \qquad \rho>0,\rho\neq1
\label{Eq:53}
\end{equation}
It then follows from (\ref{Eq:51}) that Eq. (\ref{Eq:53}) becomes
\begin{equation}
I_T^{\rho}=\frac{1}{\rho-1} [1-V^{**} ]
\label{Eq:54}
\end{equation}

\subsection{Probability weighted moment (PWM) and other distributional properties}\label{subsec:3.5}

In this subsection, probability weighted moment (PWM), order statistics, moment generating function (MGF), mean deviation, mean residual life, average waiting time of the \APHBXII\ model were derived and discussed in detail.

\subsubsection{Probability weighted moment (PWM)}\label{ssubsec:3.5.1}

Probability-weighted moment (PWM) is defined as the expectation of a prescribed transformation of a stochastic variate $X$, with $F$ denoted as its cumulative distribution function. Since the \APHBXII\ model possesses a closed form quantile function, the corresponding PWM integral can be evaluated analytically, which allow the PWMs of the \APHBXII\ model to be expressed explicitly in terms of its distribution parameters, as shown below:

\begin{equation}
\zeta_{q,r}=\E(X^q F(X)^r )=\int_0^\infty x^q F(x)^r f(x)dx
\label{Eq:55}
\end{equation}
Inserting Eqs. (\ref{Eq:11}) and (\ref{Eq:12}) into (\ref{Eq:55}), we have
\begin{equation}
\begin{aligned}
\zeta_{q,r}&=\frac{c^{1/\upsilon} \phi\eta\log\alpha}{(\alpha-1)^{r+1}} \int_0^\infty \left(\alpha^{1-T(x)} -1\right)^r \frac{x^{(q+\phi-1)} (1+x^{\phi})^{-\eta-1}}{\left(1-\cbar(1+x^{\phi})^{-\eta\upsilon}\right)^{1+1/\upsilon}} \alpha^{1-T(x)} dx
\label{Eq:56}
\end{aligned}
\end{equation}
Summarizing Eq. (\ref{Eq:56}) using series expansion, we have
\begin{equation}
\begin{aligned}
\left(\alpha^{1-T(x)} -1\right)^r &= \sum_{m=0}^r \binom{r}{m} \alpha^m (-1)^{r-m} \sum_{i=0}^\infty \frac{(-m \log\alpha)^i}{i!} \\
&\quad\times \left\{\frac{c^{1/\upsilon} (1+x^{\phi})^{-\eta}}{(1-\cbar(1+x^{\phi})^{-\eta\upsilon})^{1/\upsilon}}\right\}^i
\end{aligned}
\label{Eq:57}
\end{equation}
where,
\begin{equation*}
d_i=\frac{1}{i!} \sum_{m=0}^r \binom{r}{m} \alpha^m (-1)^{r-m} (-m \log\alpha)^i
\end{equation*}
then, 
\begin{equation}
\left(\alpha^{1-T(x)} -1\right)^r=\sum_{i=0}^\infty d_i \left\{\frac{c^{1/\upsilon} (1+x^{\phi})^{-\eta}}{(1-\cbar(1+x^{\phi})^{-\eta\upsilon})^{1/\upsilon}}\right\}^i
\label{Eq:58}
\end{equation}
and
\begin{equation}
\alpha^{1-T(x)} =\alpha\sum_{s=0}^\infty \frac{(-\log\alpha)^s}{s!} \left\{\frac{c^{1/\upsilon} (1+x^{\phi})^{-\eta}}{(1-\cbar(1+x^{\phi})^{-\eta\upsilon})^{1/\upsilon}}\right\}^s
\label{Eq:59}
\end{equation}
Multiplying the series in Eqs. (\ref{Eq:58}) and (\ref{Eq:59}) through Cauchy convolution, we get
\begin{equation}
\left(\alpha^{1-T(x)} -1\right)^r \alpha^{1-T(x)} = \alpha\sum_{l=0}^\infty e_l \left\{\frac{c^{1/\upsilon} (1+x^{\phi})^{-\eta}}{(1-\cbar(1+x^{\phi})^{-\eta\upsilon})^{1/\upsilon}}\right\}^l
\label{Eq:60}
\end{equation}
where,
\begin{equation*}
e_l=\sum_{i=0}^l d_i \frac{(-\log\alpha)^{(l-i)}}{(l-i)!}
\end{equation*}
\begin{equation*}
e_l=\sum_{i=0}^l \frac{1}{i!(l-i)!} \sum_{m=0}^r \binom{r}{m} (-1)^{(r-m)} \alpha^m (-\log\alpha m)^i (-\log\alpha)^{(l-i)}
\end{equation*}
from Eqs. (\ref{Eq:56}) and Eq. (\ref{Eq:60}), using the generalized binomial expansion,we have
\begin{equation}
\left(1-\cbar(1+x^{\phi})^{-\eta\upsilon}\right)^{-[1+((1+l)/\upsilon)]}=\sum_{k=0}^\infty \binom{1+((1+l)/\upsilon)+k-1}{k} \cbar^k (1+x^{\phi})^{-\eta\upsilon k}
\label{Eq:61}
\end{equation}
Hence, Eq. (\ref{Eq:56}) becomes,
\begin{equation}
\begin{aligned}
\zeta_{q,r}&=\frac{\alpha c^{1/\upsilon} \phi\eta\log\alpha}{(\alpha-1)^{r+1}} \sum_{l=0}^\infty e_l c^{(l/\upsilon)} \sum_{k=0}^\infty \binom{1+((1+l)/\upsilon)+k-1}{k} (\cbar)^k \\
&\quad\times\int_0^\infty x^{(q+\phi-1)} (1+x^{\phi})^{-[\eta(l+\upsilon k+1)+1]} dx
\end{aligned}
\label{Eq:62}
\end{equation}
taking $y=x^{\phi}$, $x=y^{(1/\phi)}$, $dx=\frac{1}{\phi} y^{(1/\phi-1)} dy$, and subsequently $y=z/(1-z)$, $dy=(1-z)^{-2} dz$ and substituting it into the integrand in Eq. (\ref{Eq:62}), we obtain
\begin{equation}
\int_0^\infty x^{(q+\phi-1)} (1+x^{\phi})^{-[\eta(l+\upsilon k+1)+1]} dx = \frac{1}{\phi} B\left\{\frac{q}{\phi}+1,\eta(l+\upsilon k+1)-\frac{q}{\phi}\right\}
\label{Eq:63}
\end{equation}
Inserting Eq. (\ref{Eq:63}) in (\ref{Eq:62}), we have,
\begin{equation}
\begin{aligned}
\zeta_{q,r}&=\frac{\alpha c^{1/\upsilon} \eta\log\alpha}{(\alpha-1)^{r+1}} \sum_{l=0}^\infty e_l c^{l/\upsilon} \sum_{k=0}^\infty \binom{((1+l)/\upsilon)+k}{k} \\
&\quad\times \cbar^k B\left\{\frac{q}{\phi}+1,\eta(l+\upsilon k+1)-\frac{q}{\phi}\right\}
\end{aligned}
\label{Eq:64}
\end{equation}

\subsubsection{Order statistics}\label{ssubsec:3.5.2}

For $r = 1,\dots,n$, the minimum order statistic corresponds to $r = 1$, while the maximum order statistic is obtained by equating $r = n$. Suppose $X_{(1)}, X_{(2)},\dots \\ ,X_{(n)}$ are the ordered variates drawn from the \APHBXII\ model; then the density function of the $r$-th order statistics $X_{(r)}$ is estimated as
\begin{equation}
f_r (x;\zeta)=\frac{1}{B(r,n-r+1)} F (x;\aleph)^{(r-1)} [1-F (x;\aleph)]^{(n-r)} f (x;\aleph)
\label{Eq:65}
\end{equation}
applying the series expansion given in Eq. (\ref{Eq:22}) in Eq. (\ref{Eq:65}), we obtain 
\begin{equation}
f_r (x;\aleph)=\frac{f (x;\aleph)}{B(r,n-r+1)} \sum_{i=0}^{(n-r)} (-1)^i \binom{n-r}{i} F (x;\aleph)^{(r+i-1)}
\label{Eq:66}
\end{equation}
Subsequently, inserting Eqs. (\ref{Eq:11}) and (\ref{Eq:12}) in Eq. (\ref{Eq:66}), we have
\begin{equation}
\begin{aligned}
f_r (x;\aleph)&=\frac{c^{1/\upsilon} \phi\eta\log\alpha}{B(r,n-r+1)} \sum_{i=0}^{(n-r)} (-1)^i \binom{n-r}{i} \frac{(\alpha^{1-T(x)}-1)^{(r-1+i)}}{(\alpha-1)^{(r+i)}} \\
&\quad\times\frac{x^{(\phi-1)} (1+x^{\phi})^{-\eta-1}}{(1-\cbar(1+x^{\phi})^{-\eta\upsilon})^{1+1/\upsilon}} \alpha^{1-T(x)}
\end{aligned}
\label{Eq:67}
\end{equation}
Further simplification of Eq. (\ref{Eq:67}) using the series expansions stated in Eqs. (\ref{Eq:22}) and (\ref{Eq:25}), gives rise to an expression for the \rth\ order statistics of the \APHBXII\ model, given as
\begin{equation}
\begin{aligned}
(\alpha^{1-T(x)}-1)^{(r-1+i)}&=\sum_{m=0}^{(r-1+i)} \binom{r-1+i}{m} (-1)^{(r-1+i-m)} \alpha^m \\
&\quad\times\sum_{u=0}^\infty \frac{(-m \log\alpha)^u}{u!} \{T(x)\}^u
\end{aligned}
\label{Eq:68}
\end{equation}
where,
\begin{equation*}
A_{i,m,u}=\binom{r-1+i}{m} (-1)^{(r-1+i-m)} \alpha^m \frac{(-m \log\alpha)^u}{u!}
\end{equation*}
Using $A_{i,m,u}$ as defined above, Eq. (\ref{Eq:68}) becomes,
\begin{equation}
(\alpha^{1-T(x)}-1)^{(r-1+i)}=\sum_{m=0}^{(r-1+i)} \sum_{u=0}^\infty A_{i,m,u} \{T(x)\}^u
\label{Eq:69}
\end{equation}
From Eq. (\ref{Eq:59}),
\begin{equation*}
\alpha^{1-T(x)} = \alpha\sum_{s=0}^\infty \frac{(-\log\alpha)^s}{s!} \{T(x)\}^s
\end{equation*}
Therefore, multiplying the series in Eqs. (\ref{Eq:58}) and (\ref{Eq:59}) through Cauchy convolution, we get,
\begin{equation}
(\alpha^{1-T(x)}-1)^{(r-1+i)} \alpha^{1-T(x)} = \alpha\sum_{l=0}^\infty B_{i,l} \{T(x)\}^l
\label{Eq:70}
\end{equation}
where,
\begin{equation*}
B_{i,l}=\sum_{u=0}^l \sum_{m=0}^{(r-1+i)} A_{i,m,u} \frac{(-\log\alpha)^{(l-u)}}{(l-u)!}
\end{equation*}
$B_{i,l}$ can further be expanded to,
\begin{equation*}
B_{i,l}=\frac{(-1)^l (\log\alpha)^l}{l!} \sum_{m=0}^{(r-1+i)} \binom{r-1+i}{m} (-1)^{(r-1+i-m)} \alpha^m (1+m)^l
\end{equation*}
Also, from Eqs. (\ref{Eq:67}) and (\ref{Eq:70}), using the generalized binomial expansion,we obtain
\begin{equation}
\left(1-\cbar(1+x^{\phi})^{-\eta\upsilon}\right)^{-[1+((1+l)/\upsilon)]}=\sum_{k=0}^\infty \binom{1+((1+l)/\upsilon)+k-1}{k} \cbar^k (1+x^{\phi})^{-\eta\upsilon k}
\label{Eq:71}
\end{equation}

Therefore, the fully expanded hyper-series for $f_r (x;\aleph)$ in terms of Burr-XII type kernels is given as,
\begin{equation}
\begin{aligned}
f_r (x;\aleph)&=\frac{\phi\eta\log\alpha}{B(r,n-r+1)} \sum_{i=0}^{(n-r)} (-1)^i \binom{n-r}{i} (\alpha-1)^{-(r+i)} \alpha\sum_{l=0}^\infty B_{i,l} c^{((1+l)/\upsilon)} \\
&\quad\times\sum_{k=0}^\infty \binom{((1+l)/\upsilon)+k}{k} (\cbar)^k x^{(\phi-1)} (1+x^{\phi})^{-[\eta(l+\upsilon k+1)+1]}
\end{aligned}
\label{Eq:72}
\end{equation}

\subsubsection{Moment generating function (MGF) of APHBXII model}\label{ssubsec:3.5.3}

The mgf of $\mathrm{APHBXII}(\aleph)$, say $M_X (t)$ is derived using
\begin{equation}
M_X (t)=\E(e^{tX} )=\int_0^\infty e^{tX} f(x,\aleph)dx=\sum_{r=0}^\infty \frac{t^r}{r!} \muprime_r
\label{Eq:73}
\end{equation}
Substituting Eq. (\ref{Eq:33}) into (\ref{Eq:73}), we obtain
\begin{equation}
\begin{aligned}
M_X (t)&=\sum_{i=0}^\infty \sum_{j=0}^i \sum_{k=0}^\infty \omega_{i,j,k} \frac{\eta(j+\upsilon k+1)}{\phi} \\
&\quad\times\sum_{r=0}^\infty \frac{t^r}{r!} B\left\{\left(\frac{r}{\phi}+1\right),\left(\eta(j+\upsilon k+1)-\frac{r}{\phi}\right)\right\}
\end{aligned}
\label{Eq:74}
\end{equation}

Likewise, the characteristic function of the \APHBXII\ distribution can be derived based on the \rth\ moments of the \APHBXII\ model:
\begin{equation}
\varphi_X (t)=\E(e^{itx} )=\int_{-\infty}^\infty e^{itx} f(x)dx=\sum_{r=0}^\infty \frac{(it)^r}{r!} \E(X^r)
\label{Eq:75}
\end{equation}
Inserting Eq. (\ref{Eq:33}) into (\ref{Eq:75}), an expression for the characteristic function of the \APHBXII\ model is derived as
\begin{equation}
\begin{aligned}
\varphi_X (t)&=\sum_{i=0}^\infty \sum_{j=0}^i \sum_{k=0}^\infty \omega_{i,j,k} \frac{\eta(j+\upsilon k+1)}{\phi} \\
&\quad\times\sum_{r=0}^\infty \frac{(it)^r}{r!} B\left\{\left(\frac{r}{\phi}+1\right),\left(\eta(j+\upsilon k+1)-\frac{r}{\phi}\right)\right\}
\end{aligned}
\label{Eq:76}
\end{equation}

\subsubsection{Mean deviation of the APHBXII model}\label{subsubsec:3.5.4}

The degree of dispersion in a population is assessed using the mean deviation, which is similar to the median. Suppose $M$ represents the median, and $\mu = \bar{X}$ represents the mean of the \APHBXII\ model as stated in Eqs. (\ref{Eq:20}) and (\ref{Eq:34}). 

The mean deviation of \APHBXII\ model about the mean is obtained as follows:
\begin{equation}
W_1 (X)=\E|X-\mu|=\int_0^\infty |X-\mu|f(x;\aleph)dx,
\label{Eq:77}
\end{equation}
Simplifying Eq. (\ref{Eq:77}), we have
\begin{equation}
W_1 (X) = 2\mu F(\mu;\aleph)-2\mu+2\int_\mu^\infty xf(x;\aleph)dx
\label{Eq:78}
\end{equation}
\begin{equation}
\begin{aligned}
W_1 (X)&=2\mu \frac{\alpha^{1-T(\mu)} - 1}{\alpha-1} -2\mu+2\sum_{i=0}^\infty \sum_{j=0}^i \sum_{k=0}^\infty \omega_{i,j,k} \int_\mu^\infty x g_{\phi,\eta(j+\upsilon k+1)} (x)dx
\end{aligned}
\label{Eq:79}
\end{equation}
\begin{equation}
W_1 (X)=2\mu \frac{\alpha^{1-T(\mu)} - 1}{\alpha-1} -2\mu+2 (\muprime_1-\omega_1 (t))
\label{Eq:80}
\end{equation}
\begin{equation}
\begin{aligned}
W_1 (X)&=2\mu \frac{\alpha^{1-T(\mu)} - 1}{\alpha-1} -2\mu+2\sum_{i=0}^\infty \sum_{j=0}^i \sum_{k=0}^\infty \omega_{i,j,k} \frac{\eta(j+\upsilon k+1)}{\phi} \\
&\quad\times\left\{B\left(\frac{1}{\phi}+1,\eta(j+\upsilon k+1)-\frac{1}{\phi}\right)-B_{Z_\mu} \left(\frac{1}{\phi}+1,\eta(j+\upsilon k+1)-\frac{1}{\phi}\right)\right\}
\end{aligned}
\label{Eq:81}
\end{equation}

Subsequently, the mean deviation about the median is given as
\begin{equation}
W_2 (X)=\E|X-M|=\int_0^\infty |X-M|f(x;\aleph)dx,
\label{Eq:82}
\end{equation}
\begin{equation}
W_2 (X)=-\mu+2\int_M^\infty xf(x;\aleph)dx
\label{Eq:83}
\end{equation}
After the beta transform, the final expression of the mean deviation about the median of the \APHBXII\ model is derived to be 
\begin{equation}
\begin{aligned}
W_2 (X)&=-\mu+2\sum_{i=0}^\infty \sum_{j=0}^i \sum_{k=0}^\infty \omega_{i,j,k} \frac{\eta(j+\upsilon k+1)}{\phi} \\
&\quad\times\left\{B\left(\frac{1}{\phi}+1,\eta(j+\upsilon k+1)-\frac{1}{\phi}\right)-B_{Z_M} \left(\frac{1}{\phi}+1,\eta(j+\upsilon k+1)-\frac{1}{\phi}\right)\right\}
\end{aligned}
\label{Eq:84}
\end{equation}

\subsubsection{Mean residual life of the APHBXII model}\label{ssubsec:3.5.5}

For a nonnegative random variate $X$ with $f(x)$, $F(x)$, and $S(x)=1-F(x)$, the mean residual life (MRL) at a given time $t$ is defined as $m(t)=\E[X-t | X>t]$, which can be written as
\begin{equation}
m(t)=\frac{1}{S(t)} \left[\int_t^\infty (x-t) f(x)dx\right]=\frac{1}{S(t)} \left[\int_t^\infty x f(x)dx-t\int_t^\infty f(x)dx\right]
\label{Eq:85}
\end{equation}
\begin{equation}
m(t)=\frac{1}{S(t)} \left[\int_t^\infty x f(x)dx-tS(t)\right]=\frac{1}{S(t)} \int_t^\infty x f(x)dx-t
\label{Eq:86}
\end{equation}
After the beta transform, the final expression of the mean residual life of the \APHBXII\ model is derived to be 
\begin{equation}
\begin{aligned}
m(t)&=\frac{1}{S(t)} \left(\sum_{i=0}^\infty \sum_{j=0}^i \sum_{k=0}^\infty \omega_{i,j,k} \frac{\eta(j+\upsilon k+1)}{\phi}\right. \\
&\quad\times \left.\left\{B\left(\frac{1}{\phi}+1,\eta(j+\upsilon k+1)-\frac{1}{\phi}\right)-B_{Z_t} \left(\frac{1}{\phi}+1,\eta(j+\upsilon k+1)-\frac{1}{\phi}\right)\right\}\right)-t
\end{aligned}
\label{Eq:87}
\end{equation}
where,
\begin{equation*}
\frac{1}{S(t)}=\frac{\alpha-1}{\alpha-\alpha^{1-T(x)}}
\end{equation*}

\subsubsection{The Average waiting time of APHBXII model}\label{subsubsec:3.5.6}

The average waiting time (AWT) of an item failed within an interval $[0,t]$ that follows APHBXII model is defined by
\begin{equation}
\bar{\mu} (t)=\E[t-X | X\leq t]=\frac{1}{F(t)} \int_0^t (t-x) f(x,\aleph)dx=t- \frac{1}{F(t)}\int_0^t x f(x,\aleph)dx
\label{Eq:88}
\end{equation}
The average waiting time of \APHBXII\ is given by
\begin{equation}
\label{Eq:89}
\begin{aligned}
\bar{\mu}(t)&=t-\frac{1}{F(t)} \left[\sum_{i=0}^\infty \sum_{j=0}^i \sum_{k=0}^\infty w_{i,j,k} \frac{\eta(j+\upsilon k+1)}{\phi} \right. \\
&\quad\times \left. B_{Z_t} \left(\frac{1}{\phi}+1,\eta(j+\upsilon k+1)-\frac{1}{\phi}\right)\right]
\end{aligned}
\end{equation}
where,
\begin{equation*}
\frac{1}{F(t)} =\frac{\alpha-1}{\alpha^{1-T(t)} -1}
\end{equation*}

\section{Maximum likelihood estimation of APHBXII distribution}\label{Sec:4} 

Suppose $X_1,X_2,...,X_n$ represent a collection of i.i.d. observations generated from the \APHBXII\ model ($\aleph$), then its likelihood function is given as 
\begin{equation}
L(x;;\aleph)=\prod_{i=1}^n \left(\frac{\log\alpha}{\alpha-1}\right) \frac{c^{1/\upsilon} \phi\eta x^{\phi-1} (1+x^\phi )^{-\eta}}{\{1-\bar{c}(1+x^\phi )^{-\eta\upsilon} \}^{1+1/\upsilon}} \alpha^{\left\{1-\frac{c^{1/\upsilon} (1+x^\phi )^{-\eta}}{(1-\bar{c}(1+x^\phi )^{-\eta\upsilon} )^{1/\upsilon} }\right\}}
\label{Eq:90}
\end{equation}
and, the log-likelihood function [$l(x;\aleph)=\log L(x;\aleph)=\sum_{i=1}^n \log f(x_i;\aleph)$] is given by
\begin{equation}
\begin{aligned}
l&=n\log\left(\frac{\log\alpha}{\alpha-1}\right)+\frac{n}{\upsilon}\log c+n\log\phi+n\log\eta+(\phi-1) \sum_{i=1}^n\log x_i \\
&\quad-\eta\sum_{i=1}^n\log(1+x_i^{\phi}) -\left(1+\frac{1}{\upsilon}\right) \sum_{i=1}^n\log\left\{1-\cbar(1+x_i^{\phi})^{-\eta\upsilon}\right\} \\
&\quad+\log\alpha\sum_{i=1}^n\left\{1-\frac{c^{1/\upsilon} (1+x_i^{\phi})^{-\eta}}{\left(1-\cbar(1+x_i^{\phi})^{-\eta\upsilon}\right)^{1/\upsilon}}\right\}
\end{aligned}
\label{Eq:91}
\end{equation}
Therefore, the partially differentiated log-likelihood functions with respect to $\upsilon$, $c$, $\alpha$, $\phi$, and $\eta$ are given below as
\begin{equation}
\partialderiv{l}{\alpha}=n\left(\frac{1}{\alpha\log\alpha}-\frac{1}{\alpha-1}\right)+\frac{1}{\alpha} \sum_{i=1}^n(1-T_i)
\label{Eq:92}
\end{equation}
\begin{equation}
\partialderiv{l}{\eta}=\frac{n}{\eta}-\sum_{i=1}^n\log A_i -\left(1+\frac{1}{\upsilon}\right)\sum_{i=1}^n\frac{\cbar\log A_i B_i}{D_i} +\log\alpha \sum_{i=1}^n T_i \log A_i \left(1+\frac{\cbar B_i}{D_i}\right)
\label{Eq:93}
\end{equation}
\begin{equation}
\partialderiv{l}{\phi}=\frac{n}{\phi}+\sum_{i=1}^n\log x_i -\eta\sum_{i=1}^n\partialderiv{}{\phi} (\log A_i)-\left(1+\frac{1}{\upsilon}\right) \sum_{i=1}^n\frac{1}{D_i} \partialderiv{D_i}{\phi}-\log\alpha \sum_{i=1}^n\partialderiv{T_i}{\phi}
\label{Eq:94}
\end{equation}
\begin{equation}
\partialderiv{l}{\upsilon}=-\frac{n}{\upsilon^2}\log c - \sum_{i=1}^n\log c - \sum_{i=1}^n\left[\partialderiv{}{\upsilon}\left(1+\frac{1}{\upsilon}\right)\log D_i\right] +\log\alpha \sum_{i=1}^n\left(-\partialderiv{T_i}{\upsilon}\right)
\label{Eq:95}
\end{equation}
\begin{equation}
\partialderiv{l}{c}=\frac{n}{\upsilon c}-\left(1+\frac{1}{\upsilon}\right) \sum_{i=1}^n\frac{B_i}{D_i} -\log\alpha \sum_{i=1}^n T_i \left(\frac{1}{\upsilon c}+\frac{B_i}{\upsilon D_i}\right)
\label{Eq:96}
\end{equation}
where, $A_i=1+x_i^{\phi}$, $B_i=A_i^{-\eta\upsilon}$, $D_i=1-\cbar B_i$, $T_i=c^{1/\upsilon} A_i^{-\eta} D_i^{-1/\upsilon}$.

However, due to the presence of nested logarithmic terms, powers, and cross-product interactions in the score equations, the resulting system of nonlinear likelihood equations cannot be solved analytically when set to zero. This implies that the \APHBXII\ distribution does not have a closed-form expression for the maximum likelihood estimates of the parameters. Therefore, we obtained an approximate value of the parameter estimates numerically by maximizing the log-likelihood function, which we compute using the R statistical software.

\section{Simulation study}\label{Sec:5} 

In this section, we conducted a Monte Carlo simulation study to investigate the finite-sample behavior of the maximum likelihood estimators (MLEs) of the \APHBXII\ distribution. Artificial observations were generated using the inverse-transform sampling method. Based on the quantile representation derived in Eq. (\ref{Eq:18}), a large pseudo-population of size $N=20,000$ was first generated for each parameter configuration. Monte Carlo experiments were then performed by drawing simple random samples of sizes $n \in \{20, 50, 100, 150, 200, 250, \\ 350\}$ from each population. For every sample size and parameter setting, $R=1000$ replications were executed. The parameters were estimated using the MLE technique. The log-likelihood maximization was carried out using the Nelder–Mead simplex method algorithm implemented in the \texttt{AdequacyModel} package in R, where the PDF and CDF of the \APHBXII\ model were supplied.

Four sets of true parameter values were considered for the data-generating process: 
\begin{enumerate}
    \item $\upsilon=2.5, c=0.8, \alpha=1.8, \phi=1.0, \eta=2.5$;
    \item $\upsilon=1.5, c=0.6, \alpha=2.2, \phi=1.2, \eta=1.8$;
    \item $\upsilon=3.0, c=0.4, \alpha=1.5, \phi=0.8, \eta=3.0$; and
    \item $\upsilon=2.0, c=0.7, \alpha=2.5, \phi=1.5, \eta=2.0$.
\end{enumerate}
For each combination of parameter set and sample size, the Monte Carlo estimates of the absolute bias (AB), standard error (SE), and mean squared error (MSE) of the MLEs were computed. \hyperref[Table:3]{Table 3} report the results for the parameters $\upsilon$, $c$, $\alpha$, $\phi$, and $\eta$. Our findings revealed that as the sample sizes $n$ increases, the standard error and the mean square error of the estimates decayed towards zero. This indicates that larger sample sizes produce more reliable estimates, which in turn reduces bias and stochastic errors. This implies that the maximum likelihood estimator is consistent for the parameters of the \APHBXII\ distribution.

\begin{sidewaystable*}[p!]
    \centering
     \footnotesize
    \caption{Absolute bias (AB), standard error (SE) and mean squared error (MSE) for MLEs of $\upsilon$, c, $\alpha$, $\phi$, and $\eta$ with varying values.}
    \label{Table:3}
    \begin{tabular}{@{}l *{5}{ccc}@{}}
    \toprule
    
    \multirow{2}{*}{n} & \multicolumn{3}{c}{$\upsilon=2.5$} & \multicolumn{3}{c}{c=0.8} & \multicolumn{3}{c}{$\alpha=1.8$} & \multicolumn{3}{c}{$\phi=1.0$} & \multicolumn{3}{c}{$\eta=2.5$} \\
    \cmidrule(lr){2-4} \cmidrule(lr){5-7} \cmidrule(lr){8-10} \cmidrule(lr){11-13} \cmidrule(lr){14-16}
     & AB & SE & MSE & AB & SE & MSE & AB & SE & MSE & AB & SE & MSE & AB & SE & MSE \\
    \midrule
    20 & 2.8662 & 5.8269 & 42.1276 & 2.3762 & 9.6111 & 97.9100 & 1.4671 & 6.6197 & 45.9206 & 0.4795 & 0.6712 & 0.6798 & 1.2012 & 7.1557 & 52.5863 \\
    50 & 2.3159 & 5.4448 & 34.9787 & 1.6186 & 5.3843 & 31.5805 & 1.7123 & 6.4811 & 44.8925 & 0.2677 & 0.4992 & 0.3206 & 0.7430 & 5.3249 & 28.8774 \\
    100 & 2.1569 & 5.5563 & 35.4935 & 0.8687 & 3.4083 & 12.3592 & 1.5138 & 6.3460 & 42.5234 & 0.1713 & 0.3596 & 0.1585 & 0.1300 & 2.2626 & 5.1310 \\
    150 & 1.6387 & 4.3356 & 21.4634 & 0.7028 & 2.6345 & 7.4273 & 1.0973 & 5.6000 & 32.5324 & 0.1214 & 0.2964 & 0.1025 & 0.0895 & 1.7864 & 3.1959 \\
    200 & 1.5288 & 4.0656 & 18.8496 & 0.4867 & 2.1829 & 4.9972 & 0.9560 & 3.8594 & 15.7939 & 0.1057 & 0.2649 & 0.0813 & 0.0287 & 2.0299 & 4.1170 \\
    250 & 1.5947 & 4.3376 & 21.3385 & 0.4225 & 2.2534 & 5.2514 & 0.8703 & 3.3970 & 12.2856 & 0.0821 & 0.2352 & 0.0620 & 0.0869 & 1.6840 & 2.8404 \\
    350 & 1.3414 & 3.6716 & 15.2666 & 0.1935 & 1.1537 & 1.3671 & 0.9094 & 3.7755 & 15.0672 & 0.0817 & 0.2015 & 0.0472 & 0.0328 & 0.9459 & 0.8949 \\
    \midrule
    
    \multirow{2}{*}{n} & \multicolumn{3}{c}{$\upsilon=1.5$} & \multicolumn{3}{c}{c=0.6} & \multicolumn{3}{c}{$\alpha=2.2$} & \multicolumn{3}{c}{$\phi=1.2$} & \multicolumn{3}{c}{$\eta=1.8$} \\
    \cmidrule(lr){2-4} \cmidrule(lr){5-7} \cmidrule(lr){8-10} \cmidrule(lr){11-13} \cmidrule(lr){14-16}
     & AB & SE & MSE & AB & SE & MSE & AB & SE & MSE & AB & SE & MSE & AB & SE & MSE \\
    \midrule
    20 & 1.6365 & 4.2048 & 20.3362 & 3.0684 & 10.0803 & 110.9001 & 2.0928 & 7.6699 & 63.1328 & 0.5142 & 0.7814 & 0.8743 & 2.5638 & 8.8311 & 84.4638 \\
    50 & 1.3294 & 3.6816 & 15.3064 & 1.0340 & 4.4112 & 20.5072 & 2.0299 & 6.2353 & 42.9579 & 0.3065 & 0.5575 & 0.4044 & 0.7352 & 4.1950 & 18.1196 \\
    100 & 1.4390 & 3.4992 & 14.3023 & 0.2248 & 2.1222 & 4.5494 & 1.4079 & 4.2714 & 20.2078 & 0.2428 & 0.4335 & 0.2467 & 0.0480 & 1.8883 & 3.5644 \\
    150 & 1.2923 & 3.3693 & 13.0107 & 0.1586 & 1.8073 & 3.2883 & 1.1146 & 3.5184 & 13.6086 & 0.1731 & 0.3330 & 0.1407 & 0.0007 & 1.5928 & 2.5345 \\
    200 & 1.2095 & 3.1631 & 11.4580 & 0.0299 & 1.0401 & 1.0817 & 1.2201 & 3.1452 & 11.3709 & 0.1416 & 0.2930 & 0.1058 & 0.1214 & 1.5079 & 2.2863 \\
    250 & 1.0702 & 2.9869 & 10.0577 & 0.0697 & 0.6961 & 0.4889 & 1.0591 & 3.1286 & 10.8995 & 0.1165 & 0.2564 & 0.0792 & 0.1509 & 0.8065 & 0.6726 \\
    350 & 1.0211 & 2.8936 & 9.4070 & 0.1075 & 0.5944 & 0.3645 & 0.8953 & 2.5272 & 7.1819 & 0.0982 & 0.2201 & 0.0581 & 0.1911 & 0.6453 & 0.4526 \\
    \midrule
    
    \multirow{2}{*}{n} & \multicolumn{3}{c}{$\upsilon=3.0$} & \multicolumn{3}{c}{c=0.4} & \multicolumn{3}{c}{$\alpha=1.5$} & \multicolumn{3}{c}{$\phi=0.8$} & \multicolumn{3}{c}{$\eta=3.0$} \\
    \cmidrule(lr){2-4} \cmidrule(lr){5-7} \cmidrule(lr){8-10} \cmidrule(lr){11-13} \cmidrule(lr){14-16}
     & AB & SE & MSE & AB & SE & MSE & AB & SE & MSE & AB & SE & MSE & AB & SE & MSE \\
    \midrule
    20 & 3.2183 & 7.2948 & 63.5068 & 1.5634 & 7.6124 & 60.3224 & 1.3628 & 5.1145 & 27.9839 & 0.3180 & 0.4403 & 0.2948 & 1.1871 & 8.6261 & 75.7288 \\
    50 & 2.4699 & 5.5839 & 37.2479 & 0.5120 & 2.4823 & 6.4176 & 1.4731 & 5.0955 & 28.1064 & 0.1684 & 0.2996 & 0.1180 & 0.0728 & 3.5903 & 12.8820 \\
    100 & 2.0321 & 5.1726 & 30.8585 & 0.2367 & 1.5500 & 2.4561 & 1.1754 & 4.4066 & 20.7807 & 0.1064 & 0.2240 & 0.0614 & 0.1736 & 1.7292 & 3.0174 \\
    150 & 1.9723 & 5.1240 & 30.1187 & 0.1717 & 0.9445 & 0.9206 & 0.8723 & 4.1364 & 17.8530 & 0.0754 & 0.1817 & 0.0387 & 0.2424 & 0.9966 & 1.0510 \\
    200 & 1.6721 & 4.9058 & 26.8383 & 0.1611 & 0.8531 & 0.7529 & 0.8190 & 2.6929 & 7.9153 & 0.0509 & 0.1634 & 0.0293 & 0.1442 & 1.1026 & 1.2352 \\
    250 & 1.8026 & 4.9517 & 27.7436 & 0.1003 & 0.6607 & 0.4461 & 0.7092 & 2.3805 & 6.1642 & 0.0493 & 0.1483 & 0.0244 & 0.1813 & 0.8741 & 0.7961 \\
    350 & 1.5630 & 4.3038 & 20.9471 & 0.0924 & 0.5860 & 0.3516 & 0.4565 & 1.8826 & 3.7492 & 0.0339 & 0.1246 & 0.0167 & 0.2172 & 0.5760 & 0.3786 \\
    \midrule
    
    \multirow{2}{*}{n} & \multicolumn{3}{c}{$\upsilon=2.0$} & \multicolumn{3}{c}{c=0.7} & \multicolumn{3}{c}{$\alpha=2.5$} & \multicolumn{3}{c}{$\phi=1.5$} & \multicolumn{3}{c}{$\eta=2.0$} \\
    \cmidrule(lr){2-4} \cmidrule(lr){5-7} \cmidrule(lr){8-10} \cmidrule(lr){11-13} \cmidrule(lr){14-16}
     & AB & SE & MSE & AB & SE & MSE & AB & SE & MSE & AB & SE & MSE & AB & SE & MSE \\
    \midrule
    20 & 1.6919 & 4.4007 & 22.2048 & 2.5931 & 8.3842 & 76.9317 & 1.7054 & 7.7432 & 62.7910 & 0.7024 & 1.0325 & 1.5582 & 1.4626 & 6.0755 & 39.0050 \\
    50 & 1.4268 & 4.0228 & 18.2021 & 1.0708 & 5.1836 & 27.9881 & 2.2129 & 7.4366 & 60.1418 & 0.4357 & 0.7504 & 0.7524 & 0.5226 & 4.1501 & 17.4783 \\
    100 & 1.2273 & 3.7774 & 15.7611 & 0.5547 & 2.7480 & 7.8518 & 1.6391 & 4.9169 & 26.8387 & 0.2777 & 0.5454 & 0.3742 & 0.0583 & 1.8004 & 3.2414 \\
    150 & 1.1747 & 4.3615 & 20.3835 & 0.3449 & 2.3467 & 5.6204 & 1.7639 & 7.1849 & 54.6816 & 0.2046 & 0.4419 & 0.2370 & 0.0593 & 2.0927 & 4.3787 \\
    200 & 0.9296 & 3.2429 & 11.3700 & 0.1538 & 1.3556 & 1.8593 & 1.6566 & 4.5811 & 23.7100 & 0.1571 & 0.3870 & 0.1743 & 0.0508 & 1.3288 & 1.7665 \\
    250 & 0.9049 & 3.2266 & 11.2197 & 0.0628 & 1.6301 & 2.6586 & 1.4000 & 4.4085 & 21.3754 & 0.1475 & 0.3481 & 0.1428 & 0.1116 & 0.9468 & 0.9081 \\
    350 & 0.8958 & 3.2741 & 11.5114 & 0.0196 & 0.9085 & 0.8250 & 0.9208 & 2.6507 & 7.8669 & 0.1128 & 0.2830 & 0.0927 & 0.1326 & 0.7495 & 0.5788 \\
    \bottomrule
    \end{tabular}
\end{sidewaystable*}

\section{Application of APHBXII distribution to lifetime data}\label{Sec:6}

In this section, we examine the performance, flexibility and practical value of the \APHBXII\ model using three lifetime datasets.

The first dataset is made up of 101 observations obtained from the failure times of stress-rupture life (in hours) of Kevlar 49/epoxy strands, subjected to a constant sustained pressure at the 90\% stress level, as reported by \cite{Andrews2012data}. The data are as follows: 
0.02, 0.09, 0.18, 0.19, 0.20, 0.23, 0.24, 0.24, 0.29, 0.34, 0.35, 0.36, 0.38, 0.40, 0.42, 0.43, 0.52, 0.54, 0.56, 0.60, 0.60, 0.63, 0.65, 0.10, 0.12, 0.67, 0.68, 0.72, 0.72, 0.72, 0.73, 0.03, 0.04, 0.79, 0.79, 0.80, 0.80, 0.83, 0.85, 0.90, 0.92, 0.10, 0.95, 0.99, 1.00, 1.01, 1.02, 1.03, 1.05, 1.10, 1.10, 0.07, 1.11, 1.15, 1.18, 1.20, 1.29, 1.31, 1.33, 1.34, 1.40, 0.01, 1.43, 0.11, 0.11, 1.45, 1.50, 1.51, 1.52, 0.06, 0.08, 1.53, 1.54, 1.54, 0.03, 1.55, 1.58, 1.60, 1.63, 1.64, 1.80, 0.01, 1.80, 1.81, 2.02, 2.05, 0.09, 0.13, 2.14, 2.17, 2.33, 0.02, 0.05, 3.03, 3.03, 0.07, 3.34, 4.20, 4.69, 0.02, 7.89.

The second set of data, as reported by \cite{Lee2003statistical}, represents the remission times (in months) of a random sample of 128 bladder cancer patients. The data are as follows: 
23.63, 0.40, 2.23, 1.40, 3.02, 4.34, 5.71, 7.93, 11.79, 18.10, 1.46, 4.40, 5.85, 8.26, 11.98, 19.13, 1.76, 3.25, 4.50, 6.25, 8.37, 12.02, 2.02, 3.31, 4.51, 6.54, 8.53, 12.03, 20.28, 2.02, 3.36, 6.76, 12.07, 21.73, 2.07, 3.36, 6.93, 8.65, 12.63, 22.69, 0.08, 2.09, 3.48, 4.87, 6.94, 8.66, 13.11, 0.20, 3.52, 4.98, 6.97, 9.02, 13.29, 2.26, 3.57, 5.06, 7.09, 9.22, 13.80, 25.74, 0.50, 2.46, 3.64, 5.09, 7.26, 9.47, 14.24, 25.82, 0.51, 2.54, 3.70, 5.17, 7.28, 10.06, 14.77, 26.31, 0.81, 2.62, 3.82, 5.32, 7.32, 10.06, 14.77, 32.15, 2.64, 3.88, 5.32, 7.39, 10.34, 14.83, 34.26, 0.90, 2.69, 4.18, 5.34, 7.59, 10.66, 15.96, 36.66, 1.05, 2.69, 4.23, 5.41, 7.62, 10.75, 16.62, 43.01, 1.19, 2.75, 4.26, 5.41, 7.63, 17.12, 46.12, 1.26, 2.83, 4.33, 5.49, 7.66, 11.25, 17.14, 79.05, 1.35, 2.87, 5.62, 7.87, 11.64, 17.36.

The third dataset used by \cite{Aarset1987identify}, consists of 50 data points that represent the observed time until failure for a device placed on a life test at time zero. The data are as follows: 
0.1, 0.2, 1, 1, 1, 1, 1, 2, 3, 6, 7, 11, 12, 18, 18, 18, 18, 18, 21, 32, 36, 40, 45, 46, 47, 50, 55, 60, 63, 63, 67, 67, 67, 67, 72, 75, 79, 82, 82, 83, 84, 84, 84, 85, 85, 85, 85, 85, 86, 86.

The exploratory data analysis (EDA) results of the three sets of data are recorded and shown in their descriptive analysis, as reported in \hyperref[Table:4]{Table 4}, boxplots, kernel density plots, violin plots, and the TTT plots in \hyperref[Fig:3]{Figs. 3}, \ref{Fig:4}, \ref{Fig:5}, and \ref{Fig:6}, respectively. The Kevlar49/Epoxy data and the cancer data both exhibit the same general pattern, with most observations clustered at the lower end and a few unusually large values stretching the distribution to the right. This creates a strongly positively skewed and leptokurtic shape, which is also supported by their TTT plots that lie close to the diagonal, indicating an almost constant failure rate. In contrast, the lifetime failure data behave very differently. The values are much more widely spread, the distribution is nearly symmetric, and the density plot shows it is bimodal. Its TTT plot departs noticeably from the straight line and forms the characteristic pattern of a bathtub-shaped hazard, suggesting the presence of more than one underlying failure mechanism within the data. As seen in \hyperref[Fig:7]{Fig. 7}, each of the five models reflects the strong right-skew of the datasets, with APHBXII giving the best overall match to the empirical densities. Similarly, \hyperref[Fig:8]{Fig. 8} indicates substantial agreement in the cumulative plots, with APHBXII and MOBXII tracing the empirical CDFs and capturing the heavier concentration near the lower values.

\begin{table*}[ht]
    \centering
    \caption{Descriptive statistics for the data sets.}
    \label{Table:4}
    \begin{tabular}{@{}l c c c@{}}
    \toprule
    Statistic & Set I & Set II & Set III \\
    \midrule
    Minimum      & 0.010 & 0.080 & 0.100 \\
    1st Quartile & 0.240 & 3.348 & 13.500 \\
    Median       & 0.800 & 6.395 & 48.500 \\
    Mean         & 1.025 & 9.366 & 45.686 \\
    3rd Quartile & 1.450 & 11.838 & 81.250 \\
    Maximum      & 7.890 & 79.050 & 86.000 \\
    Variance     & 1.253 & 110.425 & 1078.153 \\
    Skewness     & 3.002 & 3.287 & -0.138 \\
    Kurtosis     & 16.709 & 18.483 & 1.414 \\
    \bottomrule
    \end{tabular}
\end{table*}

\begin{figure}[ht!]
    \centering
    \includegraphics[width=\textwidth]{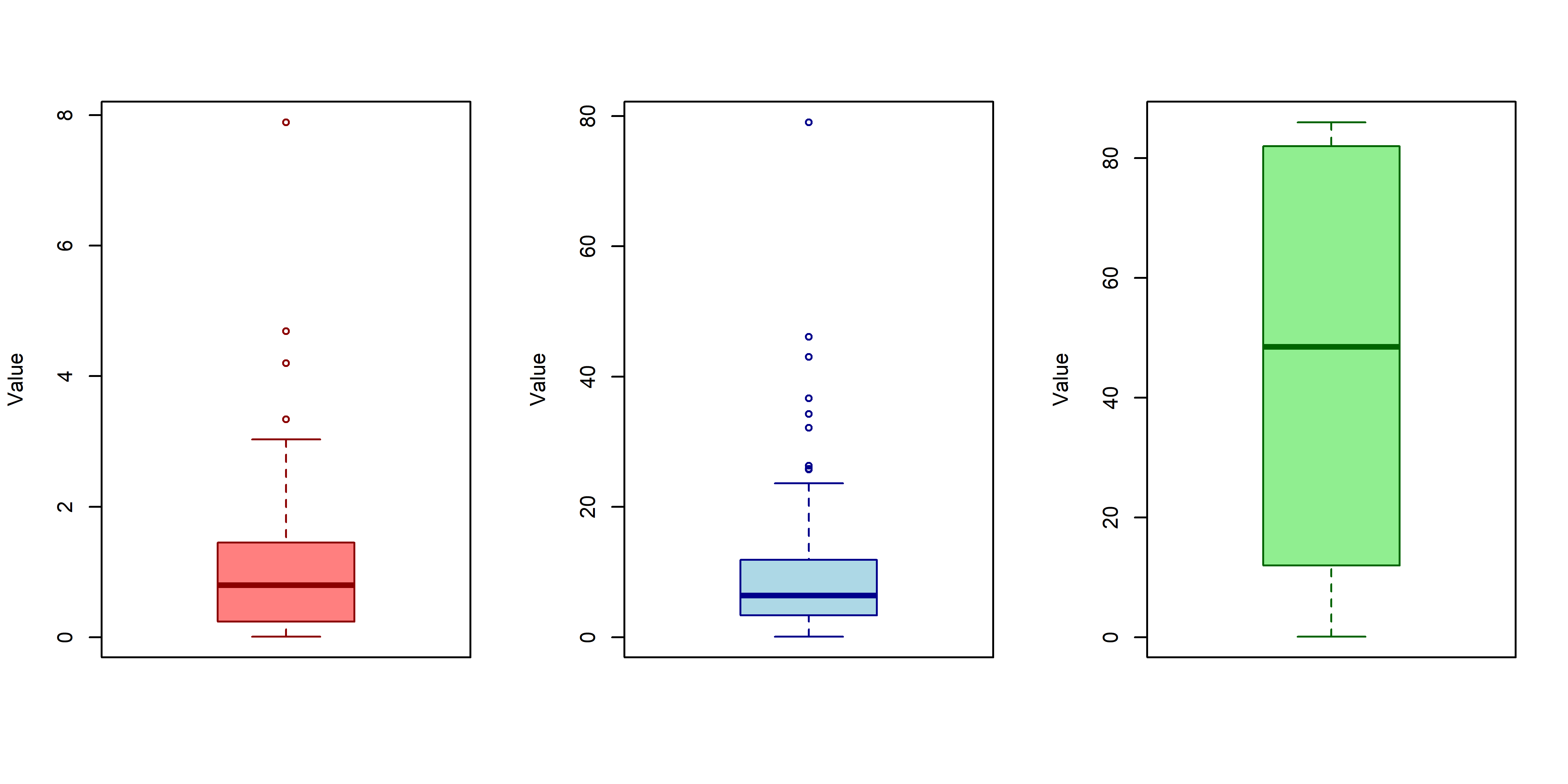}
    \caption{Boxplots for the datasets. The figure compares the distributional spread, skewness, and outliers present in the Kevlar/epoxy, cancer, and lifetime datasets.}
    \label{Fig:3}
\end{figure}

\begin{figure}[ht!]
    \centering
    \includegraphics[width=\textwidth]{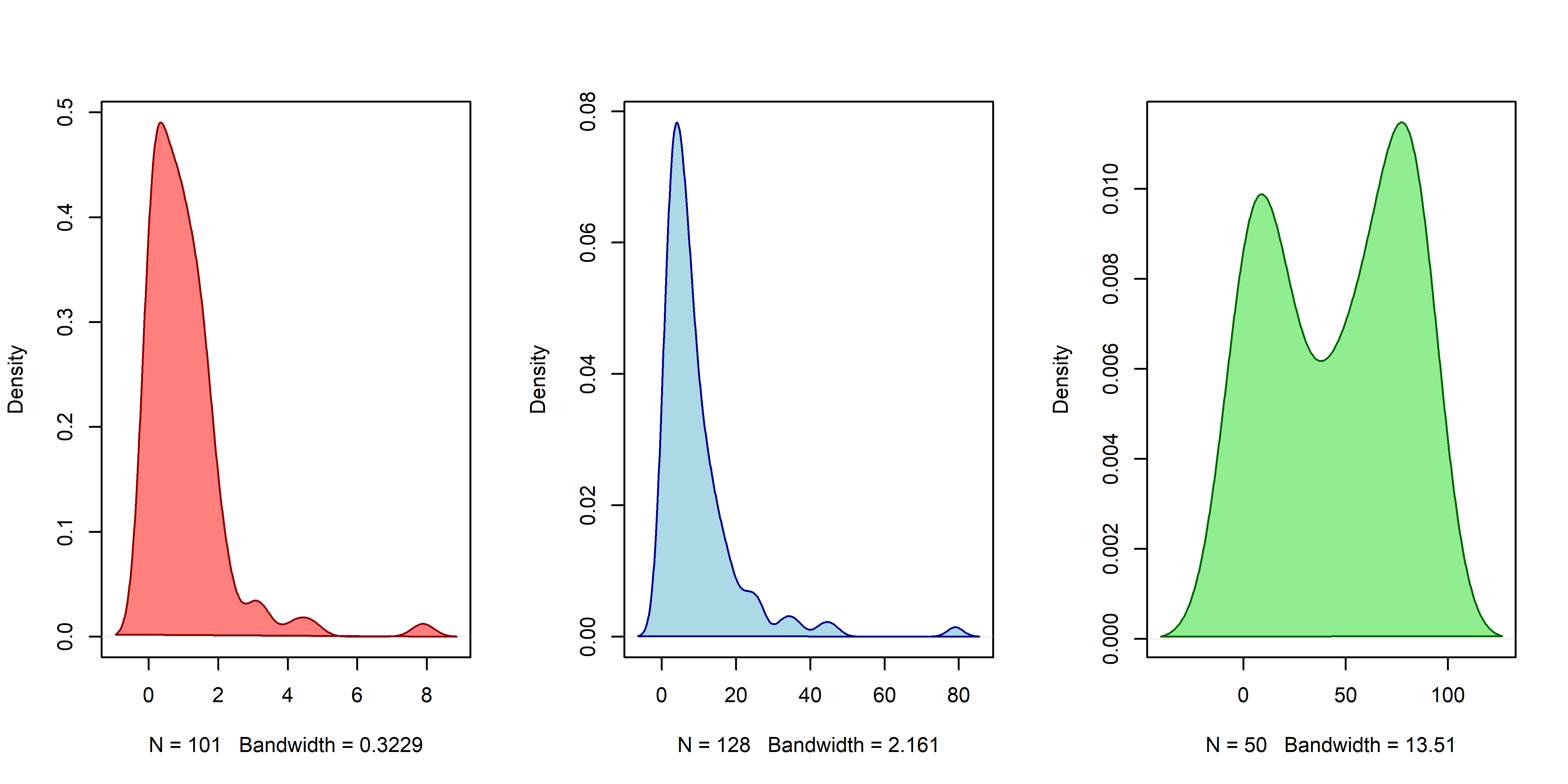}
    \caption{Kernel density plots for the datasets. Smoothed density curves are shown for the Kevlar/epoxy, cancer, and lifetime data, highlighting differences in their empirical distributions.}
    \label{Fig:4}
\end{figure}

\begin{figure}[ht!]
    \centering
    \includegraphics[width=\textwidth]{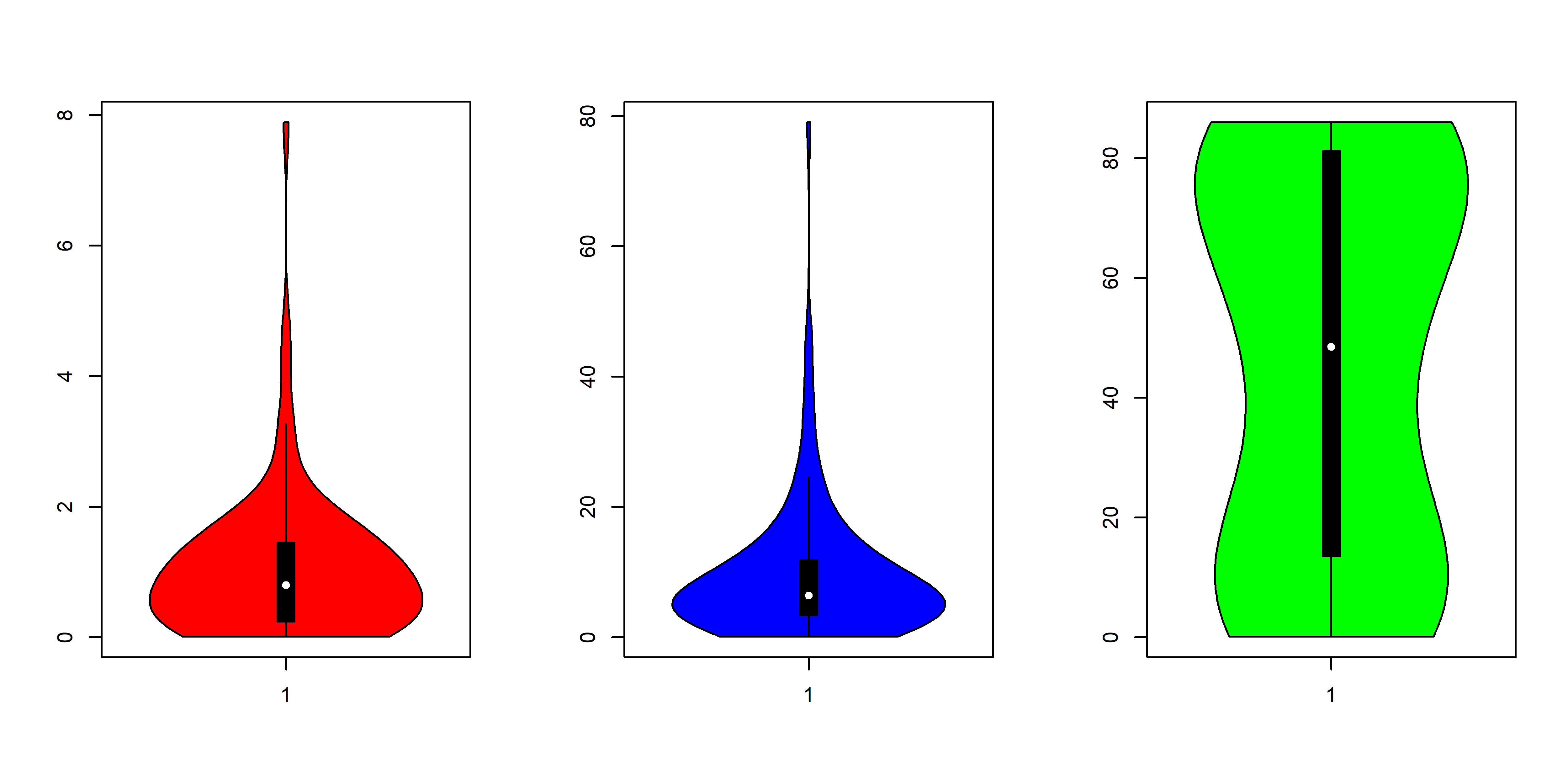} 
    \caption{Violin plots for the datasets. The violin diagrams represent the empirical distributions, combining boxplot summaries with kernel density shapes.}
    \label{Fig:5}
\end{figure}

\begin{figure}[ht!]
    \centering
    \includegraphics[width=\textwidth]{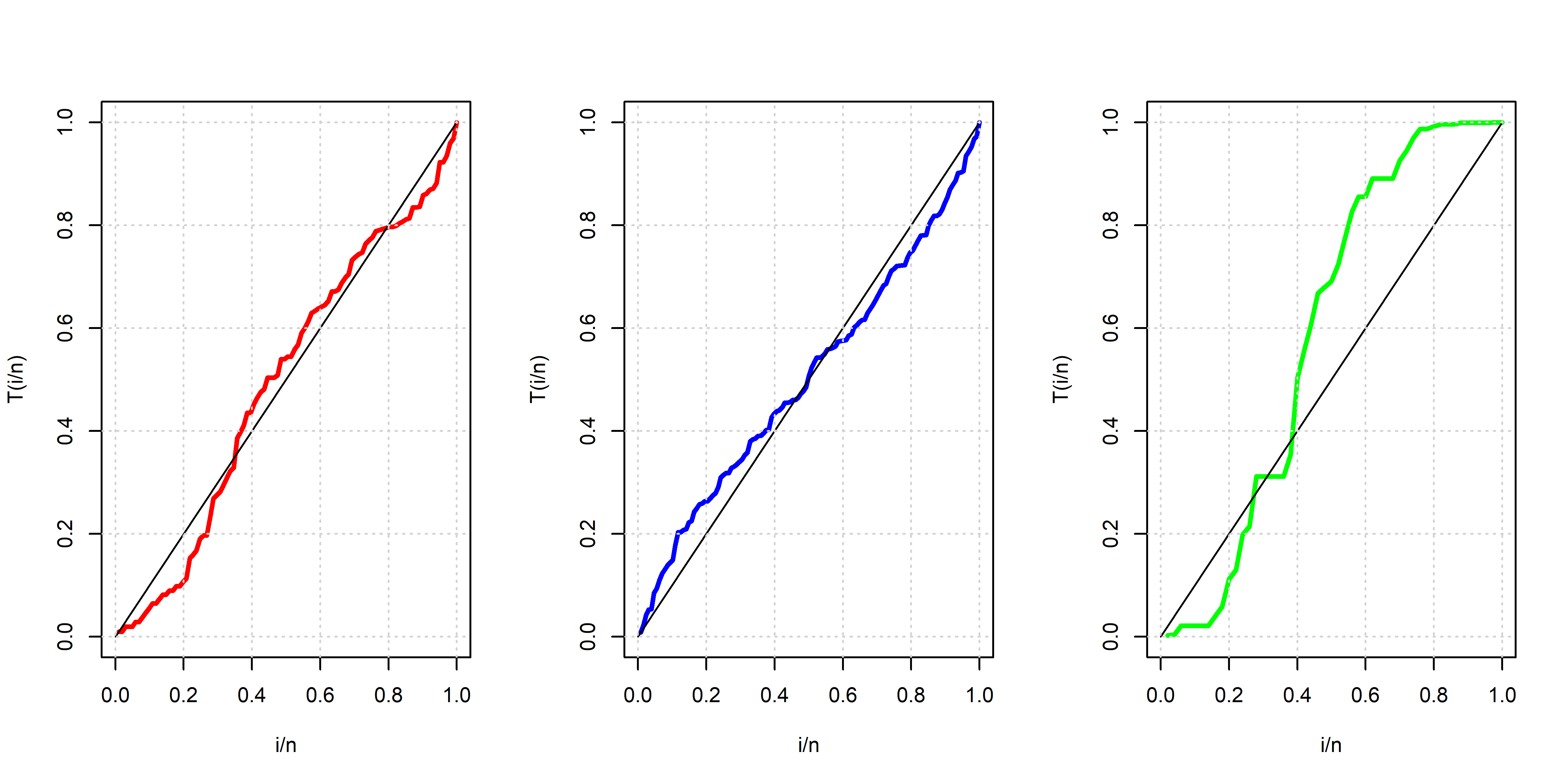}
    \caption{TTT (Total Time on Test) plots for the datasets. The TTT plots indicate the underlying ageing behavior (increasing, decreasing, or bathtub hazard tendencies) of each dataset.}
    \label{Fig:6}
\end{figure}

\begin{figure}[ht!]
    \centering
    \includegraphics[width=\textwidth]{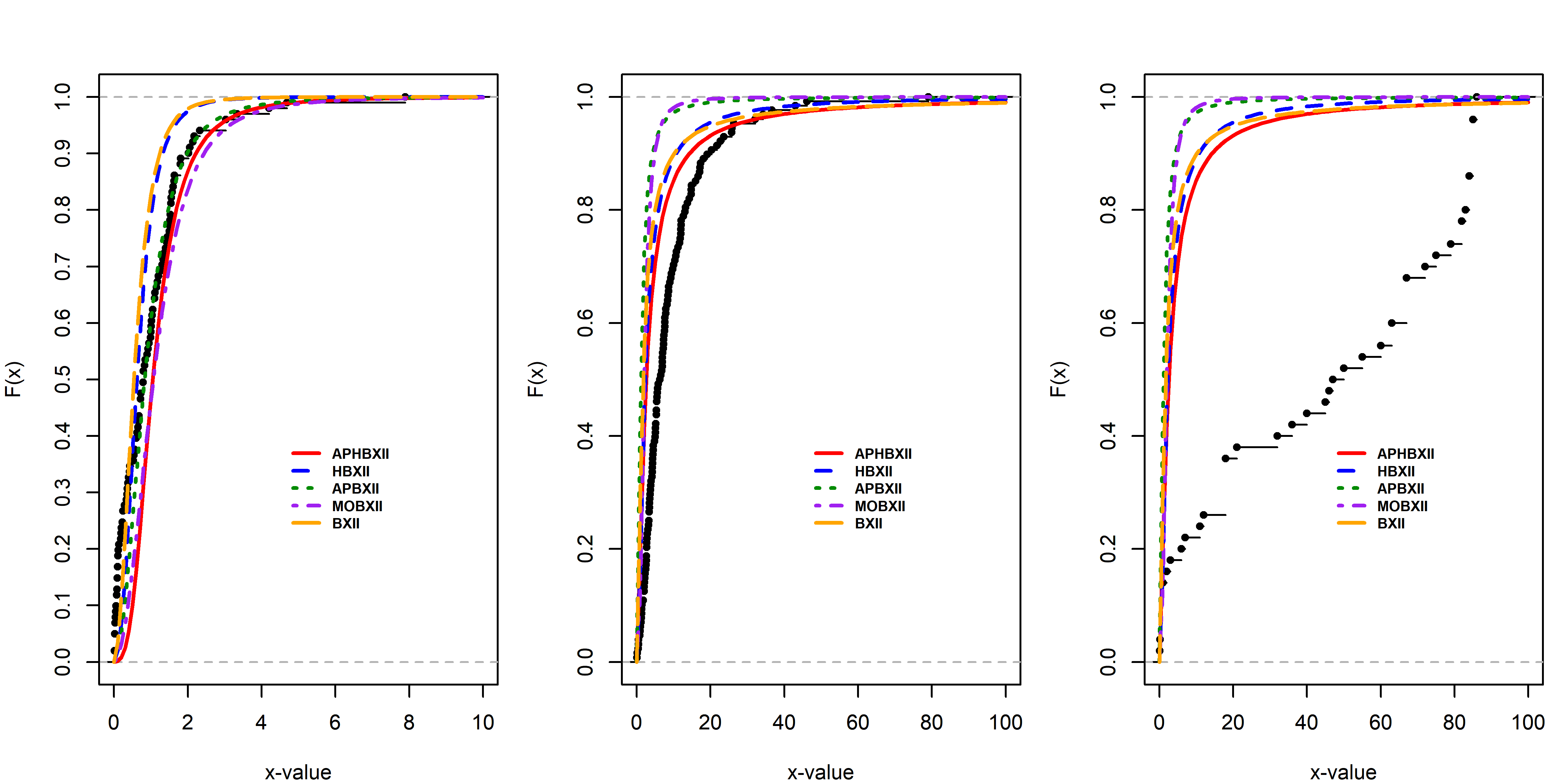}
    \caption{Estimated cumulative distribution functions for the datasets. The empirical CDFs are plotted alongside fitted model-based CDFs to assess goodness of fit.}
    \label{Fig:7}
\end{figure}

\begin{figure}[ht!]
    \centering
    \includegraphics[width=\textwidth]{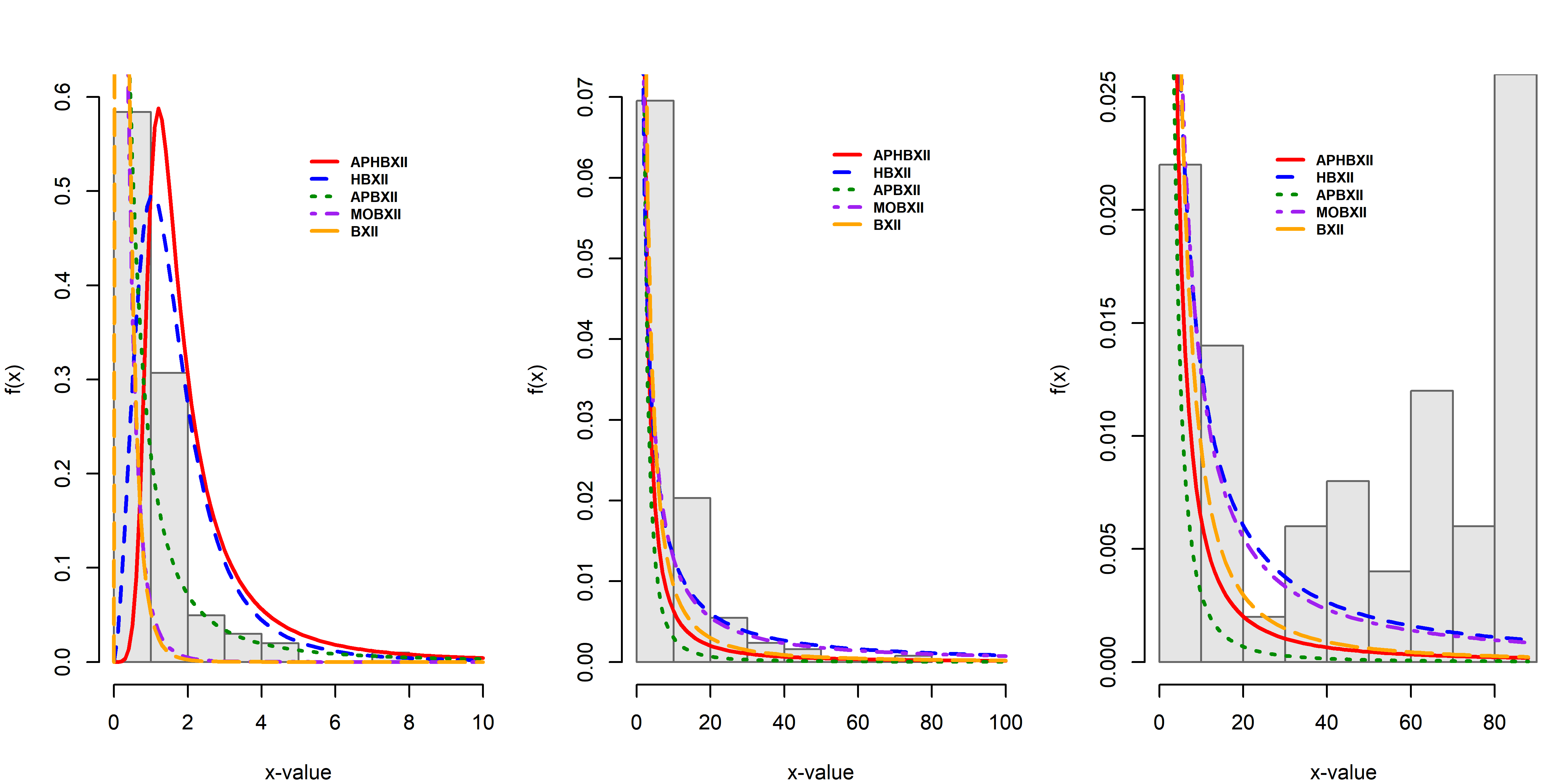}
    \caption{Estimated probability density functions for the datasets. The fitted PDFs are shown for each dataset, revealing how well the models capture the observed distributional shapes.}
    \label{Fig:8}
\end{figure}

Furthermore, we evaluate how well the \APHBXII\ distribution fits our set of data relative to four other benchmark models, which include Harris Burr XII (\HBXII) by \cite{Sivadas2022harris}, Alpha Power Burr XII (\APBXII) by \cite{Gbenga2022alpha}, Marshall-Olkin Burr XII (\MOBXII) by \cite{Al-Saiari2014marshall}, and the Burr XII (\BXII) by \cite{Burr1942cumulative}. The probability density functions of the competing models considered are, respectively, given by

\begin{equation}
f_{\HBXII} (x)=\frac{c^{1/\upsilon} \phi\eta x^{(\phi-1)} (1+x^{\phi})^{-\eta-1}}{\left(1-\cbar(1+x^{\phi})^{-\eta\upsilon}\right)^{1+1/\upsilon}}, \qquad x>0
\label{Eq:97} 
\end{equation}

\begin{equation}
f_{\APBXII} (x)=\left(\frac{\log\alpha}{\alpha-1}\right)\phi\eta x^{(\phi-1)} (1+x^{\phi})^{-\eta-1} \alpha^{1-(1+x^{\phi})^{-\eta}}, \qquad x>0
\label{Eq:98} 
\end{equation}

\begin{equation}
f_{\MOBXII} (x)=\frac{c\phi\eta x^{(\phi-1)} (1+x^{\phi})^{-\eta-1}}{\left(1-\cbar(1+x^{\phi})^{-\eta}\right)^2}, \qquad x>0
\label{Eq:99} 
\end{equation}

\begin{equation}
f_{\BXII} (x)=\phi\eta x^{(\phi-1)} (1+x^{\phi})^{-\eta-1}, \qquad x>0
\label{Eq:100} 
\end{equation}

For each model, we estimate the value of the parameters using the MLE technique and compute their corresponding log-likelihood values at the MLEs with the simulated annealing (SANN) optimization algorithm. \hyperref[Table:5]{Table 5}, \hyperref[Table:6]{6}, and \hyperref[Table:7]{7} show MLEs with SEs (in parentheses) for our sets of data. Furthermore, we compare the models using nine established goodness-of-fit measures, which include deviance ($\vartheta_1$ or $-2l$), Akaike information criterion ($\vartheta_2$), Bayesian information criterion ($\vartheta_3$), Hannan-Quinn information criterion ($\vartheta_4$), consistent Akaike information criterion ($\vartheta_5$), Cramer-von Mises ($\vartheta_6$) test, Anderson-Darling ($\vartheta_7$) test, Kolmogorov-Smirnov ($\vartheta_8$) test, and the K-S p-value ($\vartheta_9$) statistic as shown in \hyperref[Table:8]{Table 8}, \hyperref[Table:9]{9}, and \hyperref[Table:10]{10}, where the \APHBXII\ model with the lowest values of the first eight metrics and the highest K-S p-value statistic in most cases was considered the best fit. Likewise, we conducted the likelihood ratio tests, and their results, as shown in \hyperref[Table:11]{Table 11}, \hyperref[Table:12]{12}, and \hyperref[Table:13]{13}, revealed that the proposed \APHBXII\ model provides a significantly better fit than its nested competitors.

\begin{table*}[ht!]
    \centering
     \footnotesize
    \caption{MLEs and standard error (in parenthesis) for Kevlar/Epoxy data}
    \label{Table:5}
    \begin{tabular}{@{}l c c c c c@{}}
    \toprule
    Model & $\upsilon$ & c & $\alpha$ & $\phi$ & $\eta$ \\
    \midrule
    APHBXII & 0.003(0.004) & 8.394(3.300) & 10.552(13.446) & 1.926(0.289) & 1.419(0.249) \\
    HBXII   & 0.395(0.200) & 10.314(10.847) & -- & 0.799(0.227) & 6.447(3.039) \\
    APBXII  & -- & -- & 5.302(4.157) & 0.991(0.130) & 2.361(0.384) \\
    MOBXII  & -- & 7.668(6.425) & -- & 0.786(0.171) & 3.837(1.094) \\
    BXII    & -- & -- & -- & 1.174(0.098) & 1.632(0.164) \\
    \bottomrule
    \end{tabular}
\end{table*}

\begin{table*}[ht!]
    \centering
     \footnotesize
    \caption{MLEs and standard error (in parenthesis) for cancer data}
    \label{Table:6}
    \begin{tabular}{@{}l c c c c c@{}}
    \toprule
    Model & $\upsilon$ & c & $\alpha$ & $\phi$ & $\eta$ \\
    \midrule
    APHBXII & 0.524(0.308) & 14.692(9.487) & 3.957(4.064) & 1.293(0.355) & 1.945(0.700) \\
    HBXII   & 0.644(0.362) & 23.021(7.835) & -- & 1.438(0.343) & 1.561(0.654) \\
    APBXII  & -- & -- & 18.177(6.082) & 1.989(0.387) & 0.488(0.104) \\
    MOBXII  & -- & 19.865(6.411) & -- & 1.413(0.246) & 1.177(0.250) \\
    BXII    & -- & -- & -- & 2.336(0.349) & 0.232(0.039) \\
    \bottomrule
    \end{tabular}
\end{table*}

\begin{table*}[ht!]
    \centering
     \footnotesize
    \caption{MLEs and standard error (in parenthesis) for lifetime data}
    \label{Table:7}
    \begin{tabular}{@{}l c c c c c@{}}
    \toprule
    Model & $\upsilon$ & c & $\alpha$ & $\phi$ & $\eta$ \\
    \midrule
    APHBXII & 0.276(0.102) & 17.794(8.363) & 8.478(5.884) & 0.782(0.191) & 2.952(0.967) \\
    HBXII & 0.514(0.162) & 18.175(7.412) & -- & 0.986(0.217) & 1.376(0.384) \\
    APBXII & -- & -- & 21.609(12.182) & 0.972(0.287) & 0.565(0.186) \\
    MOBXII & -- & 16.147(6.731) & -- & 0.822(0.174) & 1.070(0.267) \\
    BXII & -- & -- & -- & 1.261(0.322) & 0.245(0.070) \\
    \bottomrule
    \end{tabular}
\end{table*}

\begin{table*}[ht!]
    \centering
     \footnotesize
    \caption{Goodness-of-fit metrics for the Kevlar 49/Epoxy data}
    \label{Table:8}
    \begin{tabular}{@{}l ccccc ccccc@{}}
    \toprule
    Model & $\vartheta_1$ & $\vartheta_2$ & $\vartheta_3$ & $\vartheta_4$ & $\vartheta_5$ & $\vartheta_6$ & $\vartheta_7$ & $\vartheta_8$ & $\vartheta_9$ \\
    \midrule
    APHBXII & 198.712 & 208.712 & 221.788 & 214.006 & 209.344 & 0.145 & 0.806 & 0.105 & 0.211 \\
    HBXII   & 203.976 & 211.976 & 222.436 & 216.211 & 212.393 & 0.145 & 0.849 & 0.082 & 0.513 \\
    APBXII  & 212.415 & 218.415 & 226.260 & 221.591 & 218.662 & 0.342 & 1.867 & 0.105 & 0.219 \\
    MOBXII  & 207.519 & 213.519 & 221.364 & 216.695 & 213.766 & 0.226 & 1.269 & 0.091 & 0.370 \\
    BXII    & 217.096 & 221.096 & 226.326 & 223.213 & 221.218 & 0.440 & 2.386 & 0.136 & 0.047 \\
    \bottomrule
    \end{tabular}
\end{table*}

\begin{table*}[ht!]
    \centering
     \footnotesize
    \caption{Goodness-of-fit metrics for the cancer data}
    \label{Table:9}
    \begin{tabular}{@{}l ccccc ccccc@{}}
    \toprule
    Model & $\vartheta_1$ & $\vartheta_2$ & $\vartheta_3$ & $\vartheta_4$ & $\vartheta_5$ & $\vartheta_6$ & $\vartheta_7$ & $\vartheta_8$ & $\vartheta_9$ \\
    \midrule
    APHBXII & 819.195 & 829.195 & 843.456 & 834.989 & 829.687 & 0.014 & 0.096 & 0.037 & 0.995 \\
    HBXII   & 821.266 & 829.266 & 840.674 & 833.901 & 829.591 & 0.025 & 0.185 & 0.055 & 0.831 \\
    APBXII  & 855.073 & 861.073 & 869.629 & 864.549 & 861.267 & 0.315 & 2.014 & 0.147 & 0.008 \\
    MOBXII  & 822.580 & 828.580 & 837.136 & 832.056 & 828.773 & 0.038 & 0.275 & 0.050 & 0.906 \\
    BXII    & 907.034 & 911.034 & 916.738 & 913.351 & 911.130 & 0.748 & 4.547 & 0.251 & 1.93E-07 \\
    \bottomrule
    \end{tabular}
\end{table*}

\begin{table*}[ht!]
    \centering
     \footnotesize
    \caption{Goodness-of-fit metrics for the device lifetime data}
    \label{Table:10}
    \begin{tabular}{@{}l ccccc ccccc@{}}
    \toprule
    Model & $\vartheta_1$ & $\vartheta_2$ & $\vartheta_3$ & $\vartheta_4$ & $\vartheta_5$ & $\vartheta_6$ & $\vartheta_7$ & $\vartheta_8$ & $\vartheta_9$ \\
    \midrule
    APHBXII & 491.749 & 501.749 & 511.309 & 505.389 & 503.112 & 0.589 & 3.501 & 0.226 & 0.012 \\
    HBXII   & 505.343 & 513.343 & 520.991 & 516.256 & 514.232 & 0.732 & 4.194 & 0.240 & 0.006 \\
    APBXII  & 524.138 & 530.138 & 535.874 & 532.323 & 530.660 & 0.920 & 5.094 & 0.270 & 0.001 \\
    MOBXII  & 505.677 & 511.677 & 517.413 & 513.861 & 512.198 & 0.732 & 4.215 & 0.229 & 0.011 \\
    BXII    & 544.729 & 548.729 & 552.553 & 550.185 & 548.984 & 1.093 & 5.850 & 0.333 & 2.97E-05 \\
    \bottomrule
    \end{tabular}
\end{table*}

\begin{table*}[ht!]
    \centering
     \footnotesize
    \caption{Likelihood ratio statistics for the Kevlar 49/Epoxy data}
    \label{Table:11}
    \begin{tabular}{@{}l l c c@{}}
    \toprule
    Model & Hypothesis & LR statistic & p-value \\
    \midrule
    HBXII  & H\textsubscript{0}: $\alpha$ = 1 vs H\textsubscript{1}: H\textsubscript{0} is not true & 5.264 & 0.022 \\
    APBXII & H\textsubscript{0}: $\upsilon$ = c = 1 vs H\textsubscript{1}: H\textsubscript{0} is not true & 13.703 & 0.001 \\
    MOBXII & H\textsubscript{0}: $\upsilon$ = $\alpha$ = 1 vs H\textsubscript{1}: H\textsubscript{0} is not true & 8.807 & 0.012 \\
    BXII   & H\textsubscript{0}: $\upsilon$ = c = $\alpha$ = 1 vs H\textsubscript{1}: H\textsubscript{0} is not true & 18.384 & 3.67E-04 \\
    \bottomrule
    \end{tabular}
\end{table*}

\begin{table*}[ht!]
    \centering
     \footnotesize
    \caption{Likelihood ratio statistics for the cancer data}
    \label{Table:12}
    \begin{tabular}{@{}l l c c@{}}
    \toprule
    Model & Hypothesis & LR statistic & p-value \\
    \midrule
    HBXII  & H\textsubscript{0}: $\alpha$ = 1 vs H\textsubscript{1}: H\textsubscript{0} is not true & 2.070794 & 0.150143 \\
    APBXII & H\textsubscript{0}: $\upsilon$ = c = 1 vs H\textsubscript{1}: H\textsubscript{0} is not true & 35.87769 & 1.62E-08 \\
    MOBXII & H\textsubscript{0}: $\upsilon$ = $\alpha$ = 1 vs H\textsubscript{1}: H\textsubscript{0} is not true & 3.384205 & 0.184132 \\
    BXII   & H\textsubscript{0}: $\upsilon$ = c = $\alpha$ = 1 vs H\textsubscript{1}: H\textsubscript{0} is not true & 87.83823 & 1.01E-15 \\
    \bottomrule
    \end{tabular}
\end{table*}

\begin{table*}[ht!]
    \centering
     \footnotesize
    \caption{Likelihood ratio statistics for the device lifetime data}
    \label{Table:13}
    \begin{tabular}{@{}l l c c@{}}
    \toprule
    Model & Hypothesis & LR statistic & p-value \\
    \midrule
    HBXII  & H\textsubscript{0}: $\alpha$ = 1 vs H\textsubscript{1}: H\textsubscript{0} is not true & 13.595 & 2.27E-04 \\
    APBXII & H\textsubscript{0}: $\upsilon$ = c = 1 vs H\textsubscript{1}: H\textsubscript{0} is not true & 32.389 & 9.26E-08 \\
    MOBXII & H\textsubscript{0}: $\upsilon$ = c = 1 vs H\textsubscript{1}: H\textsubscript{0} is not true & 13.928 & 9.45E-04 \\
    BXII   & H\textsubscript{0}: $\upsilon$ = c = $\alpha$ = 1 vs H\textsubscript{1}: H\textsubscript{0} is not true & 52.980 & 1.85E-11 \\
    \bottomrule
    \end{tabular}
\end{table*}

\section{Conclusion}\label{Sec:7} 

In this study, we introduce and establish the Alpha Power Harris--G (\APHG) as a new family of continuous distributions. Furthermore, we use the Burr XII distribution as a baseline in the \APHG\ generator, and a new five-parameter \APHBXII\ distribution was derived and studied. The proposed distribution has several advantageous probabilistic characteristics and properties that make it a strong candidate for modelling lifetime data with monotone and non-monotone failure rates. Subsequently, a Monte Carlo simulation study was conducted, and the MLE method was used to estimate the unknown parameters of the model. Building on this, three real lifetime datasets were analyzed, and the goodness-of-fit metrics of the \APHBXII\ distribution were compared with those of four other competitive models. Although, it cannot be guaranteed that the proposed distribution will always provide the best fit, but the model may outperform other established ageing distributions in many practical scenarios. It is anticipated that the new \APHBXII\ distribution will gain wider utilization in reliability engineering and survival-time modelling. Lastly, the newly proposed \APHG\ generator may also serve as a useful tool for researchers interested in developing novel probability distributions.

\bibliographystyle{elsarticle-num}
\biboptions{sort&compress}

\bibliography{references}

\end{document}